\def\part{{\color{black}{part}}}
\newtheorem*{sandwichability}{Sandwichability}
\newtheorem*{optimizer}{Entropic Optimizer}
\newtheorem*{convexity}{Convexity}
\newtheorem*{epentropy}{Edge Profile Entropy}
\newtheorem*{edgeblocksymmetry}{Edge Block Symmetry}
\newtheorem{theorem}{Theorem}
\newtheorem{proposition}{Proposition}
\newtheorem{definition}{Definition}
\newtheorem{lemma}{Lemma}
\newtheorem{remark}{Remark}
\def\N{\mathrm{I\!N}}
\def\R{\mathrm{I\!R}}
\def\P{\mathbb{P}}
\def\erdos{Erd\H{o}s}
\def\renyi{R\'{e}nyi}
\def\ENT{{\sc Ent}}
\def\MU{Thickness}
\newcommand{\sand}{sandwichable}
\def\P{\mathbb{P}}
\newcommand{\nats}{\mathbb{N}}
\begin{document}

\title{Product Measure Approximation of Symmetric Graph Properties}

\author{
Dimitris Achlioptas
\thanks{Research supported by a European Research Council (ERC) Starting Grant (StG-210743) and an Alfred P. Sloan Fellowship.} 
\\ Department of Computer Science\\ University of California, Santa Cruz\\{\tt optas@cs.ucsc.edu}\\
\and 
Paris Siminelakis
\thanks{Supported in part by an Onassis Foundation Scholarship.}
 \\ Department of Electrical Engineering\\ Stanford University\\{\tt psimin@stanford.edu}
}

\date{\empty}

\maketitle

\begin{abstract}
%Let $S$ be an arbitrary subset of the set of all graphs on $n$ vertices. For example, let $V$ be a set of $n$ points on the plane and let $S$ be the set of all graphs on $V$ whose total edge length is at most $B$, for some fixed $B\ge 0$. What does a uniformly random element of $S$ look like? How does it change as $B$ is increased? Motivated by questions like these, we develop a general theorem that allows one to approximate the uniform measure on sets of graphs satisfying certain symmetries by a product measure, i.e., by a distribution in which edges are included independently, potentially with different probabilities. The symmetries we require are natural and, in fact, embedded in many existing random graph models (typically in far stronger form). The sense of approximation is such that for any monotone property $\pi$, e.g., connectivity, studying $\pi$ in the uniform measure on $S$, can be replaced by studying $\pi$ on the far more tractable product measure. As a result, we hope to instigate the study of random graph models that do not \emph{presume} independence among edges.  

In the study of random structures we often face a trade-off between realism and tractability, the latter typically enabled by assuming some form of independence. % sufficiently strong independence assumption. 
In this work we initiate an effort to bridge this gap by developing tools that allow us to work with independence without assuming it. Let $\mathcal{G}_{n}$ be the set of all graphs on $n$ vertices and let $S$ be an arbitrary subset of $\mathcal{G}_{n}$, e.g., the set of graphs with $m$ edges. The study of random networks can be seen as the study of properties that are true for \emph{most} elements of $S$, i.e., that are true with high probability for a uniformly random element of $S$. With this in mind, we pursue the following question: \emph{What are general sufficient conditions for the uniform measure on a set of graphs $S \subseteq \mathcal{G}_{n}$ to be approximable by a product measure?}
\end{abstract}

\section{Introduction}

Since their introduction in 1959 by \erdos\ and \renyi~\cite{erdHos1959} and Gilbert~\cite{gilbert}, respectively, $G(n,m)$ and $G(n,p)$ random graphs have dominated the mathematical study of random networks~\cite{bolo_book,Janson_book}. Given $n$ vertices, $G(n,m)$ selects uniformly among all graphs with $m$ edges, whereas $G(n,p)$ includes each edge independently with probability $p$. A refinement of $G(n,m)$ are graphs chosen uniformly among all graphs with a given degree sequence, a distribution made tractable by the configuration model of Bollob{\'a}s~\cite{bolo_book}. Due to their mathematical tractability these three models have become a cornerstone of Probabilistic Combinatorics %~\cite{alon2004probabilistic} 
and have found application in %other domains such as 
the Analysis of Algorithms, %~\cite{achlioptas2008algorithmic, kalai2006threshold}, 
Coding Theory, %~\cite{CodingTheory}, 
Economics, %~\cite{jackson2010,easley2010}, 
Game Theory, %~\cite{braess,daskalakis2011connectivity}, 
and Statistical Physics.%~\cite{mezard2009}. 

This mathematical tractability stems from symmetry: the probability of each edge is either the same, as in $G(n,p)$ and $G(n,m)$, or merely a function of the potency of its endpoints, as in the configuration model. This extreme symmetry bestows such graphs with numerous otherworldly properties such as near-optimal expansion. Perhaps most importantly, it amounts to a complete lack of geometry, as manifest by the fact that the shortest path metric of such graphs suffers maximal distortion when embedded in Euclidean space~\cite{distortion}. In contrast, vertices of real networks are typically embedded in some low-dimensional geometry, either explicit (physical networks), or implicit (social and other latent semantics networks), with distance being a strong factor in determining the probability of edge formation.

While these shortcomings of the classical models have long been recognized, proposing more realistic models is not an easy task. The difficulty lies in achieving a balance between realism and analytical tractability: it is only too easy to create network models that are both ad hoc and intractable. By now there are thousands of papers proposing different ways to generate graphs with desirable properties~\cite{Goldenberg} and the vast majority of them only provide heuristic arguments to back up their claims. For a gentle introduction the reader is referred to the book of Newman~\cite{newman2010networks} and for a more mathematical treatment to the books of Chung and Lu~\cite{chung2006complex} and of Durrett~\cite{durrett}. 

In trying to replicate real networks one approach is to keep adding features, creating increasingly complicated models, in the hope of  \emph{matching} observed properties. Ultimately, though, the purpose of any good model is prediction. In that sense, the reason to study random graphs with certain properties is to understand what \emph{other} graph properties are \emph{typically} implied by the assumed properties. For instance, the reason we study the uniform measure on graphs with $m$ edges, i.e., $G(n,m)$, is to understand ``what properties are typically implied by the property of having $m$ edges" (the answer cast as  ``properties that hold with high probability in a `random' graph with $m$ edges"). Notably, analyzing the uniform measure even for this simplest property is non-trivial. The reason is that it entails the single massive random choice of an $m$-subset of edges, rather than many independent choices. In contrast, the independence of edges in $G(n,p)$ makes the distribution far more accessible, dramatically enabling analysis.

It is a classic result of random graph theory that 
%while those the two distributions are different, results between them largely transfer. The reason is that to sample according to $G(n,p)$, since edges are independent and equally likely, we can first sample an integer $m \sim \mathrm{Bin}\left(\binom{n}{2},p\right)$ and then a uniformly random graph with $m$ edges. As a result, 
for $p=p(m)=m/\binom{n}{2}$, the random graph $G\sim G(n,m)$ and the two random graphs $G^{\pm} \sim G(n,(1\pm\epsilon)p)$ can be coupled so that, viewing each graph as a set of edges, with high probability,
\begin{equation}
\label{eq-sandwich}
G^{-} \subseteq G \subseteq G^{+} \enspace .
\end{equation}
The significance of this relationship between what we \emph{wish} to study (uniform measure) and what we \emph{can} study (product measure) can not be overestimated. It manifests most dramatically in the study of monotone properties: to study such a property in $G\sim G(n,m)$, it suffices to consider $G^+$ and show that it typically does not have the property (negative side), or $G^-$ and show that it typically does have the property (positive side). This connection has been thoroughly exploited to establish threshold functions for a host of monotone graph properties such as Connectivity, Hamiltonicity, and Subgraph Existence, making it the workhorse of random graph theory. 

In this work we seek to extend the above relationship between the uniform measure and product measures to properties more delicate  than having a given number of edges. In doing so we (i) provide a tool that can be used to revisit a number of questions in random graph theory from a more realistic angle and (ii) lay the foundation for designing random graph models eschewing independence assumptions. For example, our tool makes short work of the following set of questions (which germinated our work):\smallskip

Given an arbitrary collection of $n$ points on the plane what can be said about the set of all graphs that can be built on them using a given amount of wire, i.e., when connecting two points consumes wire equal to their distance? What does a uniformly random such graph look like? How does it change as a function of the available wire?

\subsection{Our Contribution}

A product measure on the set, $\mathcal{G}_{n}$, of all undirected simple graphs with $n$ vertices is specified by a symmetric matrix $\mathbf{Q} \in [0,1]^{n\times n}$ where $Q_{ii}=0$ for $i\in[n]$. By analogy to $G(n,p)$, we denote by $G(n,\mathbf{Q})$ the measure in which each possible edge $\{i,j\}$ is included independently with probability $Q_{ij}=Q_{ji}$. Let $S \subseteq \mathcal{G}_{n}$ be arbitrary. Our main result is a sufficient condition for the uniform measure over $S$, denoted by $U(S)$, to be approximable  by a product measure in the following sense.
\begin{sandwichability}
The measure $U(S)$  is $(\epsilon,\delta)$-\sand\ if there exists an $n\times n$ symmetric matrix $\mathbf{Q}$ such that the distributions $G^{\pm} \sim G(n,(1\pm\epsilon)\mathbf{Q})$, and $G \sim U(S)$ can be coupled so  that $G^{-} \subseteq G \subseteq G^{+}$ with probability at least $1-\delta$.
\end{sandwichability}

Informally, the two conditions required for our theorem to hold are as follows:\medskip

\noindent\emph{Partition Symmetry.} The set $S$ should be symmetric with respect to \emph{some} partition of the $\binom{n}{2}$ possible edges. That is, the characteristic function of $S$ can depend on \emph{how many} edges are included from each part but not on \emph{which}. The set of all graphs with $m$ edges satisfies this trivially: place all edges in one part and let the characteristic function be the indicator that exactly $m$ edges are included. Far more interestingly, in our motivating example edges are partitioned into equivalence classes according to their length (distance of endpoints) and the characteristic function allows every combination (vectors) of number of edges from each part that does not violate the total wire budget. We discuss the motivation for edge-partition symmetry at length in Section~\ref{symmetry}.\smallskip 

\noindent\emph{Convexity.} Partition symmetry  reduces the study of $U(S)$ to the study of the induced measure on the set of all possible combinations (vectors) of numbers of edges from each part. As, in principle, this set can be arbitrary we must impose some regularity. We chose to do this by requiring this discrete set to be \emph{convex} (more precisely that it contains all integral points in its convex hull). While convexity is actually not necessary for our proof method to work it provides a clean conceptual framework while allowing very general properties to be expressed. These include properties expressible as Linear Programs over the number of edges from each part and even non-linear sets of constraints expressing the presence or absence of percolation. Most importantly, since convex sets are closed under intersection, convex properties can, be composed (set intersection) while maintaining approximability by a product measure. \medskip

We state our results formally in Section~\ref{sec:results}. The general idea is this.  
\begin{theorem}[Informal] 
If $S$ is a convex symmetric set with sufficient edge density, then $U(S)$ is sandwichable by a product measure $G(n,\mathbf{Q}^{*})$.
\end{theorem}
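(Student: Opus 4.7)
The plan is to exploit partition symmetry to reduce $U(S)$ to the distribution of edge profile vectors, and then to combine concentration of the profile with the classical $G(n,m)$--$G(n,p)$ sandwich applied within each part. Fix a partition $\mathcal{P}=(P_1,\ldots,P_k)$ of the $\binom{n}{2}$ possible edges that witnesses symmetry, with $|P_j|=N_j$, and for any graph $G$ let $\mathbf{M}(G)=(M_1(G),\ldots,M_k(G))$ be its edge profile. Writing $\mathcal{M}\subseteq\N^{k}$ for the convex set of feasible profiles so that $S=\{G:\mathbf{M}(G)\in\mathcal{M}\}$, the uniform measure $U(S)$ factors as: sample $\mathbf{m}\sim\mu$ with $\mu(\mathbf{m})\propto\prod_{j}\binom{N_j}{m_j}\mathbf{1}[\mathbf{m}\in\mathcal{M}]$, then independently pick a uniformly random $m_j$-subset of edges from each $P_j$. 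The candidate product measure is $Q^{*}_{e}=q^{*}_{j}$ for $e\in P_j$, where the entropic optimizer $\mathbf{q}^{*}$ is chosen so that $\mathbf{m}^{*}:=(N_1 q^{*}_1,\ldots,N_k q^{*}_k)$ maximizes $H(\mathbf{m}):=\sum_{j}\log\binom{N_j}{m_j}$ over the convex hull of $\mathcal{M}$. By construction, $\mathbf{m}^{*}$ is simultaneously the mode of $\mu$ and the mean profile of $G(n,\mathbf{Q}^{*})$.

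The proof then proceeds in two concentration steps that are spliced together by a coupling. On the profile side, strict log-concavity of $\binom{N_j}{\cdot}$ makes $H$ strictly concave, so the first-order optimality of $\mathbf{m}^{*}$ combined with a quadratic upper bound of the form $H(\mathbf{m})-H(\mathbf{m}^{*})\lesssim-\sum_{j}(m_j-m_j^{*})^{2}/N_j$ on the convex hull of $\mathcal{M}$ yields sub-Gaussian tails for $\mathbf{M}$ under $\mu$. Provided each $m_j^{*}$ is sufficiently large (this is what ``sufficient edge density'' buys), this gives $(1-\eps)\mathbf{m}^{*}\le\mathbf{M}\le(1+\eps)\mathbf{m}^{*}$ with probability at least $1-\delta/2$. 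On the within-part side, the classical $G(N_j,m_j)$ vs.\ $G(N_j,p_j)$ sandwich applies inside each $P_j$ separately: whenever the Binomials $\mathrm{Bin}(N_j,(1\pm\eps)q^{*}_j)$ stay within an $\eps$-window of their means (Chernoff delivers this with probability $1-\delta/(2k)$ per part), a uniform $m_j$-subset of $P_j$ is sandwiched between Bernoulli products of rates $(1\pm\eps)q^{*}_j$. Union-bounding over the $k$ parts and the profile event produces the required coupling $G^{-}\subseteq G\subseteq G^{+}$.

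The main obstacle I expect is the first concentration step, because the typical profile $\mathbf{m}^{*}$ need not lie in the interior of $\mathcal{M}$. When active constraints of $\mathcal{M}$ pin $\mathbf{m}^{*}$ to a face of the polytope, as happens already for $G(n,m)$ where a single hyperplane fixes the total number of edges, the symmetric quadratic upper bound around $\mathbf{m}^{*}$ is only meaningful inside the tangent cone of $\mathcal{M}$ at $\mathbf{m}^{*}$, and the entropic optimizer must absorb the corresponding Lagrange multipliers in order for $\mathbf{m}^{*}$ to genuinely dominate within $\mathcal{M}$. Carrying out this analysis at a sharp enough quantitative rate so that all $k$ within-part Chernoff bounds fire simultaneously is where the real technical work lies and what pins down the precise meaning of both ``sufficient edge density'' and ``entropic optimizer'' in the theorem statement.
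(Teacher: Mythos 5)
Your architecture is the same as the paper's: factor $U(S)$ through the edge profile, define the entropic optimizer $\mathbf{m}^{*}$ by maximizing $\sum_j\log\binom{N_j}{m_j}$ over $\mathrm{Conv}(\mathcal{M})$, prove concentration of the profile around $\mathbf{m}^{*}$ via a second-order Taylor bound on the entropy, then splice in the classical per-part $G(m)$ vs.\ $G(p)$ coupling and union-bound over the $k$ parts and the profile event. The obstacle you flag at the end (that $\mathbf{m}^{*}$ may sit on a face of the feasible polytope, so the quadratic bound must respect the tangent cone) is precisely what the paper resolves, and resolves cleanly: because $\mathbf{m}^{*}$ maximizes the entropy over the \emph{convex hull}, first-order optimality forces $\nabla H_{\mathcal{P}}(\mathbf{a}^{*})\cdot(\mathbf{b}-\mathbf{a}^{*})\leq 0$ for every $\mathbf{b}$ in the hull, so the linear Taylor term is nonpositive and can simply be dropped. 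There is nothing further to ``absorb''; this is the (one) place convexity is actually used.

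The concrete error is in your quadratic bound. You write $H(\mathbf{m})-H(\mathbf{m}^{*})\lesssim-\sum_{j}(m_j-m_j^{*})^{2}/N_j$, but the Hessian of $N_jH_b(m_j/N_j)$ in $m_j$ is $-N_j/\bigl(m_j(N_j-m_j)\bigr)$, so the correct denominator is (up to a constant) the \emph{thickness} $\min\{m_j^{*},N_j-m_j^{*}\}$, not the part size $N_j$; the paper's Lemma states it as $-\sum_j (w_j-m_j^{*})^2/\max\{\tilde m_j^{*},\tilde w_j\}$. In the sparse regime $m_j^{*}\ll N_j$, which is the regime of interest, your version is weaker by a factor of $N_j/m_j^{*}$: for a deviation $\epsilon m_j^{*}$ it gives entropy drop $\epsilon^{2}(m_j^{*})^{2}/N_j$ rather than $\epsilon^{2}m_j^{*}$, so the tail you obtain cannot beat the $\Theta(k\log n)$ losses from Stirling and from the pointwise lower bound on $\log|S|$ unless $(m_j^{*})^{2}/N_j\gg k\log n$. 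Under your bound, ``sufficient edge density'' would have to mean essentially constant density per part, whereas the actual theorem needs only thickness $\mu(S)\gg k\log n$, which allows $m_j^{*}$ to be polylogarithmic even when $N_j=\Theta(n^2)$. Fixing the denominator (by evaluating the Hessian at a point on the segment and bounding it by the endpoints' thickness) repairs this and recovers the paper's rate.
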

The matrix $\mathbf{Q}$ is constructed by maximizing a concave function (entropy) over a convex domain (the convex hull mentioned above). As a result, in many cases it can be computed explicitly, either analytically or numerically. Further, it essentially characterizes the set $S$, as all quantitative requirements of our theorem are expressed only in terms of $\mathbf{Q}^{*}$, $k$ the number of parts in the partition and $n$ the number of vertices. 

The proof of the theorem crucially relies on a new concentration inequality we develop for symmetric subsets of the binary cube and which, as we shall see, is \emph{sharp}. Besides enabling the study of monotone properties, our results allow obtaining tight estimates of moments (expectation, variance, etc) of local features like subgraph counts.

\paragraph{Outline of the Paper}

In the next two sections we provide motivation (Section~\ref{symmetry}) and some example applications of our theorem (Section~\ref{sec:examples}). We state our results formally in  Section 4 and provide a technical overview of the proofs in Section 5. In Section 6, we discuss a connection of our work to Machine Learning and to Probabilistic Modeling in general. The rest of the  paper is devoted to the proofs. In Sections~\ref{sec:proof} and \ref{sec:sandwich} we present the proofs of concentration and sandwichability respectively. Finally,  Section~\ref{app:basic} provides the proofs of some basic supplementary results.	

\section{Motivation}\label{symmetry}

As stated, our goal is to enable the study of the uniform measure over sets of graphs. The first step in this direction is to identify a ``language" for specifying sets of graphs that is expressive enough to be interesting and restricted enough to be tractable.

Arguably the most natural way to introduce structure on a set is to impose symmetry. This is formally expressed as the \emph{invariance} of the set's characteristic function under the action of a group of transformations. In this work, we explore the progress that can be made if we define an \emph{arbitrary} partition of the edges and take the set of transformations to be the the Cartesian product of all possible permutations of the edges (indices) within each part (symmetric group). While our work is only a first step towards a theory of extracting independence from symmetry, we argue that symmetry with respect to an edge partition is well-motivated for two reasons. \smallskip

\noindent{\bf Existing Models.} 
The first is that such symmetry, typically in a very rigid form, is already implicit in several random graph models besides $G(n,m)$. Among them are Stochastic Block Models (SBM), which assume the much stronger property of symmetry with respect to a \emph{vertex} partition, and Stochastic Kronecker Graphs~\cite{leskovec2010kronecker}. The fact that our notion of symmetry encompasses SBMs is particularly pertinent in light of the theory of Graph Limits~\cite{Lovasz2012}, since inherent in the construction of the limiting object is an intermediate approximation of the sequence of graphs by a sequence of SBMs, via the (weak) Szemer\'{e}di Regurality Lemma~\cite{frieze1999quick,borgs2014p}. Thus, any property that is encoded in the limiting object, typically subgraph densities, is expressible within our framework.\smallskip

%and particularly can capture densities of subgraphs of size depending on the regularity of the sequence~\cite[Section 2.9~]{borgs2014p}.}
%\footnote{This is unreadable and we have to get it right as it is very psarotic.}
\noindent{\bf Enabling the Expression of Geometry.} 
A strong driving force behind the development of recent random graph models has been the incorporation of geometry, an extremely natural backdrop for network formation. Typically this is done by embedding the vertices in some (low-dimensional) metric space and assigning probabilities to edges as a function of length. Our work enables a far more light-handed approach to incorporating geometry by viewing it (i) as a symmetry rendering edges of the same length equivalent, while (ii) recognizing that it imposes \emph{macroscopic} constraints on the set of feasible graphs. Most obviously, in a physical network where edges (wire, roads) correspond to a resource (copper, concrete) there is a bound on how much can be invested to create the network while, more generally, cost (length) may represent a number of different notions (e.g., class membership) that distinguish between edges. 

Perhaps the \emph{most significant feature} of our work is that it fully supports the expression of geometry, by allowing the partition of edges into equivalence classes, without imposing any specific geometric requirement, i.e., without mandating the partition. \medskip

\noindent{\bf Why Convexity?} As mentioned, partition symmetry  reduces the study of $U(S)$ to the study of the induced distribution on valid combinations (vectors) of numbers of edges from each part. Without any assumptions this set can be arbitrary, e.g., $S$ can be the set of graphs having either $n^{1/2}$ or $n^{3/2}$ edges, rendering any approximation by a product measure hopeless. To focus on the cases where an approximation would be meaningful we adopt the conceptually clear and intuitive condition of convexity. In reality, convexity of the set is actually a proxy of the real property that we use in the proof, that of \emph{approximate unimodality} (Section \ref{sec:technical}).

\section{Applications}\label{sec:examples}

Given a partition, let $m_{i}(G)$ denote the number of edges from part $i$ in a graph $G$, let $\mathbf{m}(G)$ denote the \emph{edge-profile} of $G$, and let $\mathbf{m}(S)=\{\mathbf{m}(G): G \in S\}$ for $S \subseteq \mathcal{G}_n$.\smallskip

\noindent {\bf Bounded Budget.} Given a vertex set $V$ and any cost function $c:V\times V\to \R_{+}$ partition the edges into equivalence classes according to cost. Given a budget $B$, let $S(B)=\{G\in \mathcal{G}_{n}| \sum_{i}m_{i}(G)c_{i} \leq B\}$, i.e, the set of all graphs feasible with budget $B$. Using the tools developed in this paper, we can study the uniform measure on $S(B)$ and show that the probability $Q_{uv}^{*}$ of an edge $\{u,v\}$ with cost $c_{uv}$ is exponentially small in its cost, specifically $\displaystyle Q_{uv}^{*} = [1+\exp(\lambda(B) c_{uv})]^{-1}$, where $\lambda(B)$ is decreasing in $B$.\smallskip

\noindent{\bf Linear Programs.} Instead of a single cost for each edge and a single budget, edges can have multiple attributes, e.g., throughput, latency, etc. Grouping edges with identical attributes in one part, we can write arbitrary linear systems whose variables are the components of the edge-profile $\mathbf{m}$, expressing capacity constraints, latency constraints, explicit upper and lower bounds on the acceptable number of edges from a class, etc.  Now, $S=\{G\in \mathcal{G}_{n}| A\cdot \mathbf{m}(G)\leq \mathbf{b}  \}$, 
for some matrix $\pmb{A}=[\pmb{A}_{1}\ldots \pmb{A}_{k}]\in \R^{\ell \times k}$ and vector $\mathbf{b}\in \R^{\ell}$. Besides generality of expression, the entropy optimization problem defining $\mathbf{Q}$ has a closed form analytic solution in terms of the dual variables $\pmb{\lambda}\in \R_{+}^{\ell}$. 
The probability of an edge $(u,v)$ in part $i$ is now given by:
$Q_{uv}^{*}(S) = \left[1+\exp(\pmb{A}_{i}^{T}\pmb{\lambda})\right]^{-1}$. In Section~\ref{sec:conclavio}, we show how this result can be used to justify the assumptions of Logistic Regression.\smallskip

\noindent{\bf Navigability.} 
%As an example of the intricate properties that can be treated within our framework, we discuss network navigability~\cite{Nature}.  
Kleinberg~\cite{Nature,NIPS} gave sufficient conditions for greedy routing to discover paths of poly-logarithmic length between any two vertices in a graph. One of the most general settings where such navigability is possible is set-systems, a mathematical abstraction of the relevant geometric properties of grids, regular-trees and graphs of bounded doubling dimension. The essence of navigability lies in the requirement that for any vertex in the graph, the probability of having an edge to a vertex at distance in the range $[2^{i-1},2^{i})$ is approximately uniform for all $i\in[\log n]$. 

In our setting, we can partition the $\binom{n}{2}$ edges according to distance scale so that part $P_{i}$ includes all possible edges between vertices at scale $i$. By considering graphs of bounded budget as above, in~\cite{navigating} we recover Kleinberg's results on navigability in set-systems, but without any independence assumptions regarding network formation, or coordination between the vertices (such as using the same probability distribution). Besides establishing the robustness of navigability, eschewing a specific mechanism for (navigable) network formation allows us to recast navigability as a property of networks brought about by economical (budget) and technological (cost) advancements. \smallskip

\noindent{\bf Percolation Avoidance.} Consider a social network consisting of $\ell$ groups of sizes $\rho_{i}n$, where $\rho_i >0$ for $i\in[\ell]$. Imagine that we require a specific group $s$ to act as the ``connector", i.e., that the graph induced by the remaining groups should have no component of size greater than $\epsilon n$ for some arbitrarily small $\epsilon > 0$. 

To study the uniform measure on the set of all such graphs  $S=S_{\epsilon}$, as with SBMs, it is natural to partition the possible edges in $\binom{\ell}{2}$ equivalence classes based on the communities of the endpoints. While the set $S_{\epsilon}$ is not symmetric with respect to this partition our result can still be useful in the following manner. Using a well known connection between Multitype Branching Process~\cite{mode1971multitype} and the existence of Giant Component in mean-field models, like Erdos-Renyi and Stochastic Block Models, we can recast the non-existence of the giant component in terms of a condition on the number of edges between each block. In the sense that conditional on the number of edges, if the condition holds  with high probability there is no giant component. Concretely, 
%Utilizing a theorem of Bollobas, Janson and Riordan~\cite{inhomogeneous} we can provide a precise \emph{sufficient} asymptotic condition for the property of interest. 
given the edge-profile $\mathbf{M}$ and for a given cluster $s\in [\ell]$, define the $(\ell-1)\times(\ell-1)$ matrix: \[
T(\mathbf{M})_{ij} :=\frac{m_{ij}}{n^{2}\rho_{i}}, \forall i,j \in [\ell]\setminus\{s\}
\]
that encapsulates the dynamics of a multi-type branching process. Let $\lVert \cdot \rVert_{2}$ denote the \emph{operator norm} (maximum singular value). A classic result of branching processes asserts that if $\lVert T(\mathbf{M}) \rVert_{2}< 1$ no giant component exists. Thus, in our framework, the property $S_{\epsilon}=\{ \text{ no giant component without vertices from  } s \}$, not only can be accurately approximated under the specific partition $\mathcal{P}$ by a set $\hat{S}=\left\{ \mathbf{M}: \ \lVert T(\mathbf{M}) \rVert_{2}< 1 \right\}$
that happens to also be a convex function of $\mathbf{M}$. 
\section{Definitions and Results}\label{sec:results}
For each convex symmetric set $S$ we will identify an approximating product measure $\mathbf{Q}^{*}(S)$ as the solution of a constrained entropy-maximization problem. %that sometimes can even be solved analytically. 
A  precise version of our theorem requires the definition of parameters capturing the geometry of the convex domain along with certain aspects of the edge-partition. As we will see, our results are sharp with respect to these parameters. Below we give a series of definitions concluding with a formal statement of our main result

We start with some notation. We will use lower case boldface letters to denote vectors and uppercase boldface letters to denote  matrices. Further, we fix an arbitrary enumeration of the $N=\binom{n}{2}$ edges and sometimes represent the set of all graphs on $n$ vertices as $H_{N}=\{0,1\}^{N}$.  %and expressing all the possible edge sets as binary strings of length $N$. From now on 
We will refer to an element of $x\in H_{N}$ interchangeably as a graph and a string.  
Given a partition $\mathcal{P}=(P_{1},\ldots,P_{k})$ of $[N]$, we define $\Pi_{N}(\mathcal{P})$ to be the set of all permutations acting only within blocks of the partition.
%:
%\[
%\Pi_{N}(\mathcal{P}):=\left\{\pi \in \mathcal{S}_{N}: \forall x \in H_{N}, \pi(x)=\left(\pi_{1}(x_{P_{i}}),\ldots,\pi_{k}(x_{P_{k}})\right) \text{ and } \pi_{i}\in \mathcal{S}_{|P_{i}|}, \forall i \in [k] \right\}
%\] where $\forall x \in H_{N}$ and $i\in[k],$  $x_{P_{i}}=(x_{j_{1}},\ldots,x_{j_{p_{i}}})$ is the substring of $x$ consisting of indices $j_{\ell}$ in $P_{i}$. 
 
\begin{edgeblocksymmetry}
\label{def-symmetry}
Fix a partition $\mathcal{P}$ of $[N]$. A set $S\subseteq H_{N}$ is called $\mathcal{P}$-symmetric if it is invariant under the action of $\Pi_{N}(\mathcal{P})$. Equivalently, if $\mathbb{I}_{S}(x)$ is the indicator function of set $S$,  then $\mathbb{I}_{S}(x) = \mathbb{I}_{S}(\pi(x))$ for all $x\in H_{N}$ and $\pi \in \Pi_{N}(\mathcal{P})$.
\end{edgeblocksymmetry}
The number of blocks $k=|\mathcal{P}|$ gives a rough indication of the amount of symmetry present. For example, when $k=1$  we have maximum symmetry and all edges are equivalent. In a stochastic block model (SBM) with $\ell$ classes, $k=\binom{\ell}{2}$.
% and additionally, there is the symmetry stemming from the fact that the partition factorizes over vertices. 
For a $d$-dimensional lattice, partitioning the potential edges by distance results in roughly $k = n^{1/d}$ parts, whereas finally if $k=N$ there is no symmetry whatsoever. Our results accommodate partitions with as many as $O(n^{1-\epsilon})$ parts. This is way more than enough for most situations. For example, as we saw, in lattices there are $O(n^{1/d})$ distances, while if we have $n$ generic points such that the nearest pair of points have distance 1 while the farthest have distance $D$, fixing any $\delta > 0$ and binning together all edges of length $[(1+\delta)^i, (1+\delta)^{i+1})$ for $i\ge 0$, yields only $O(\delta^{-1}\log D)$ classes.
 
Recall that given a partition $\mathcal{P}=(P_{1},\ldots,P_{k})$ of $H_{N}$ and a graph  $x\in H_{N}$, the edge profile of $x$ 
is $\mathbf{m}(x) := (m_{1}(x),\ldots,m_{k}(x))$, where $m_{i}(x)$ is the number of edges of $x$ from $P_{i}$, and that the image of a $\mathcal{P}$-symmetric set $S$ under $\mathbf{m}$ is denoted as $\mathbf{m}(S)\subseteq \R^{k}$. The edge-profile is crucial to the study of $\mathcal{P}$-symmetric sets due to the following basic fact (proven in Section~\ref{app:basic}).
\begin{proposition}\label{paparia}
Any function $f:H_{N}\to \R$  invariant under $\Pi_{N}(\mathcal{P})$ depends only on  the edge-profile $\mathbf{m}(\mathbf{x})$.
\end{proposition}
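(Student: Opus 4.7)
The plan is to show that the orbits of $H_N$ under the group $\Pi_N(\mathcal{P})$ are precisely the level sets of the edge-profile map $\mathbf{m} : H_N \to \mathbb{R}^k$. Once this is established, any $\Pi_N(\mathcal{P})$-invariant function $f$ is constant on orbits and therefore factors through $\mathbf{m}$, i.e., $f(x) = g(\mathbf{m}(x))$ for some $g : \mathbf{m}(H_N) \to \mathbb{R}$, which is exactly the conclusion.

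The easy inclusion is that orbits lie inside level sets: if $\pi \in \Pi_N(\mathcal{P})$, then $\pi$ permutes each block $P_i$ among itself, so the number of $1$'s of $x$ restricted to $P_i$ is preserved, giving $m_i(\pi(x)) = m_i(x)$ for every $i$. The substantive step is the reverse inclusion: given $x, y \in H_N$ with $\mathbf{m}(x) = \mathbf{m}(y)$, exhibit a $\pi \in \Pi_N(\mathcal{P})$ with $\pi(x) = y$. Fix a block $P_i$. Since $m_i(x) = m_i(y)$, the restrictions $x|_{P_i}$ and $y|_{P_i}$ are binary strings on the index set $P_i$ with the same number of $1$'s. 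Choose any bijection $\pi_i : P_i \to P_i$ that sends the support of $x|_{P_i}$ onto the support of $y|_{P_i}$ (and hence the zeros of $x|_{P_i}$ onto the zeros of $y|_{P_i}$); such a bijection exists because the two supports have equal cardinality. Then $\pi_i$ transforms $x|_{P_i}$ into $y|_{P_i}$.

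Assembling the $\pi_i$'s into a single permutation $\pi$ of $[N]$ (acting as $\pi_i$ on each $P_i$) gives an element of $\Pi_N(\mathcal{P})$ by construction, and $\pi(x) = y$ coordinate-wise. Therefore any two strings with the same edge-profile lie in the same orbit, completing the matching between orbits and level sets. The main thing to be careful about is that the permutation is built blockwise and that the blockwise permutations are mutually compatible because the $P_i$'s are disjoint; this is immediate from the partition structure, so no real obstacle arises. The proof is essentially a one-line orbit-stabilizer style observation tailored to the product-of-symmetric-groups action, and the proposition follows.
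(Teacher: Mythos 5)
Your proof is correct and follows essentially the same approach as the paper's: both reduce to the observation that block-wise permutations act transitively on strings with a given edge-profile, so that $\Pi_N(\mathcal{P})$-orbits coincide with level sets of $\mathbf{m}$ and any invariant function factors through $\mathbf{m}$. The paper phrases this by sending each string to a canonical representative (sorting the $1$'s to the front within each block), while you exhibit a matching permutation between any two strings in the same level set; these are interchangeable formulations of the same idea.
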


In particular, since membership in $S$ depends solely on a graph's edge-profile, it follows that a uniformly random element of $S$ can be selected as follows: (i) select an edge profile $\mathbf{v}=(v_{1},\ldots,v_{k})\in\mathbb{R}^{k}$ from the distribution on $\mathbf{m}(S)$  induced  by $U(S)$, and then (ii) for each $i \in [k]$ independently select a uniformly random $v_i$-subset of $P_i$. (Formally, this is Proposition~\ref{paparaki} in Section~\ref{app:basic}.) Thus, conditional on the edge-profile, not only is the distribution of edges known, but it factorizes in a product of $G(n,m)$ distributions. In other words, the complexity of the uniform measure on $S$ manifests entirey in the \emph{induced} distribution on $\mathbf{m}(S)\in \nats^{k}$ whose structure we need to capture.
\begin{definition}
Let $p_{i}=|P_{i}|$ denote the number of edges in \part\ $i$ of  partition $\mathcal{P}$.
\end{definition}

\begin{epentropy}%\label{def:entropy}
Given an edge profile $\mathbf{v} \in \mathbf{m}(S)$ define the entropy of $\mathbf{v}$ as
$\displaystyle{\text{\ENT}(\mathbf{v}) = \sum_{i=1}^{k}\log \binom{p_{i}}{v_{i}}}$.
\end{epentropy}

Using the edge-profile entropy we can express the induced distribution on $\mathbf{m}(S)$ as $\mathbb{P}(\mathbf{v})=\frac{1}{|S|}e^{\text{\ENT}(\mathbf{v})}$. The crux of our argument is now this: the only genuine obstacle to $S$ being approximable by a product measure is degeneracy, i.e., the existence of multiple, well-separated edge-profiles that maximize $\text{\ENT}(\mathbf{v})$. The reason we refer to this as degeneracy is that it typically encodes a hidden symmetry of $S$ with respect to $\mathcal{P}$. For example, imagine that $\mathcal{P} = (P_1,P_2)$, where $|P_1| = |P_2|=p$, and that $S$ contains all graphs with $p/2$ edges from $P_1$ and $p/3$ edges from $P_2$, or vice versa. Then, the presence of a single edge $e \in P_i$ in a uniformly random $G \in S$ boosts the probability of all other edges in $P_i$, rendering a product measure approximation impossible. 

Note that since $\mathbf{m}(S)$ is a discrete set, it is non-trivial to quantify what it means for the maximizer of $\text{\ENT}$ to be ``sufficiently unique". For example, what happens if there is a unique maximizer of $\text{\ENT}(\mathbf{v})$ strictly speaking,  but sufficiently many near-maximizers to potentially receive, in aggregate, a majority of the measure? To strike a balance between conceptual clarity and generality we focus on the following.

\begin{convexity}\label{def-convex}
Let $\mathrm{Conv}(A)$ denote the convex hull of a set $A$. Say that a $\mathcal{P}$-symmetric set $S\subseteq \mathcal{G}_{N}$ is convex iff the convex hull of $\mathbf{m}(S)$ contains no new integer points, i.e., if $\mathrm{Conv}(\mathbf{m}(S))\cap \nats^{k} = \mathbf{m}(S)$.
\end{convexity}
Let $H_{\mathcal{P}}(\mathbf{v})$ be the approximation to $\text{\ENT}(\mathbf{v})$ that results by replacing each binomial term with its binary entropy approximation via the first term in Stirling's approximation (see \eqref{ent-prof} in Section~\ref{sec:proof}).

\begin{optimizer}
Let $\mathbf{m}^{*}=\mathbf{m}^{*}(S)\in \R^{k}$ be the solution to $ \displaystyle{\max_{\mathbf{v} \in \mathrm{Conv}(\mathbf{m}(S))} H_{\mathcal{P}}(\mathbf{v})}$.
\end{optimizer}

Defining the optimization over the convex hull of $\mathbf{m}(S)$ is crucial, as it will allow us to study the set $S$ by studying only the properties of the maximizer $\mathbf{m}^{*}$. Clearly, if a $\mathcal{P}$-symmetric set $S$ has entropic optimizer $\mathbf{m}^{*} = (m^{*}_1,\ldots,m^{*}_k)$, the natural candidate product measure for each $i \in [k]$ assigns probability $m_{i}^{*}/p_{i}$ to all edges in part $P_i$. The challenge is to relate this product measure to the uniform measure on $S$ by proving concentration of the induced measure on $\mathbf{m}(S)$ around a point near $\mathbf{m}^{*}$. For that we need (i) the vector $\mathbf{m}^{*}$ to be ``close" to\footnote{Indeed, this is the only use we make of convexity in the proof presented here.} a vector in  $\mathbf{m}(S)$, and (ii) to control the decrease in entropy ``away" from $\mathbf{m}^{*}$. To quantify this second notion we need the following parameters, expressing the geometry of convex sets.
\begin{definition}
Given a partition $\mathcal{P}$ and a $\mathcal{P}$-symmetric convex set $S$, we define
\begin{align}
\text{\emph{\MU:}}\qquad 				& \mu  		=  \mu(S) 	 =	 \min_{i\in[k]} \min\{m^{*}_{i}, p_{i}-m^{*}_{i}\} \qquad \qquad \qquad\\
\text{\emph{Condition number:}} \qquad 	& \lambda 	= \lambda(S)  =    \frac{5k\log n}{\mu(S)} \label{def:cond}\\
\text{\emph{Resolution:}} \qquad 			& r 		  	= r(S)    		= 	 \frac{\lambda+\sqrt{\lambda^{2}+4\lambda}}{2}>\lambda\label{eq-resol} 
\end{align}
\end{definition}
The most important parameter is the \emph{thickness} $\mu(S)$. It quantifies the minimal  coord- inate-distance of the optimizer $\mathbf{m}^{*}(S)$ from the natural boundary $\{0,p_{1}\}\times \ldots \times \{0,p_{k}\}$ where the entropy of a class becomes zero. Thus, this parameter determines the \emph{rate of coordinate-wise concentration} around the optimum.

The \emph{condition number} $\lambda(S)$, on the other hand, quantifies the robustness of $S$. To provide intuition, in order for the product measure approximation to be accurate for every class of edges (part of $\mathcal{P}$), fluctuations in the number of edges of order $\sqrt{m_{i}^{*}}$  need to be  ``absorbed" in the mean $m_{i}^{*}$. For this to happen with polynomially high probability for a single part, standard results imply we must have $m_{i}^{*} = \Omega(\log (n))$. We absorb the dependencies between parts by taking a union bound, thus multiplying by the number of parts, yielding the numerator in~\eqref{def:cond}. Our results give strong probability bounds when $\lambda(S)\ll 1$, i.e., when in a typical graph in $S$ the number of edges from each part is $\Omega(k\log n)$ edges away from triviality, a condition we expect to hold in all natural applications. We can now state our main result.
\begin{theorem}[Main result]\label{thm-sandwich}
Let $\mathcal{P}$ be any edge-partition and let $S$ be any $\mathcal{P}$-symmetric convex set. For every $\epsilon>\sqrt{12\lambda(S)}$, the uniform measure over $S$ 
is $(\epsilon,\delta)$-\sand, where $\delta =2 \exp\left[ - \mu(S)\left(\frac{\epsilon^{2}}{12}-\lambda(S) \right)\right]$.
\end{theorem}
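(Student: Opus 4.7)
The plan is to reduce sandwichability to two largely independent tasks via the two-stage sampling of $U(S)$ described after Proposition~\ref{paparia}: drawing $G\sim U(S)$ amounts to (i) first sampling an edge profile $\mathbf{v}$ from the induced Gibbs-type law $\mathbb{P}(\mathbf{v})=|S|^{-1}e^{\ENT(\mathbf{v})}$ on $\mathbf{m}(S)$, and then (ii) independently selecting a uniformly random $v_{i}$-subset of $P_{i}$ for each $i\in[k]$. The target product matrix $\mathbf{Q}^{*}$ assigns probability $q_{i}^{*}:=m_{i}^{*}/p_{i}$ to every edge of $P_{i}$. Thus the proof splits into (A) a coordinatewise concentration statement for $\mathbf{m}(G)$ around $\mathbf{m}^{*}$, and (B) a part-by-part classical sandwich between $G(p_{i},v_{i})$ and independent Bernoulli inclusions at rate $(1\pm\epsilon)q_{i}^{*}$.

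For step (A) I would establish that with probability at least $1-\delta$,
\[
|m_{i}(G)-m_{i}^{*}|\;\le\;\tfrac{\epsilon}{3}\min\{m_{i}^{*},\,p_{i}-m_{i}^{*}\}\qquad\text{for every }i\in[k].
\]
Using Stirling to replace $\ENT$ by $H_{\mathcal{P}}$ up to lower-order terms and exploiting the first-order optimality of $\mathbf{m}^{*}$ on $\mathrm{Conv}(\mathbf{m}(S))$, one controls the entropy deficit $H_{\mathcal{P}}(\mathbf{v})-H_{\mathcal{P}}(\mathbf{m}^{*})$ by the negative quadratic form of the diagonal Hessian of $H_{\mathcal{P}}$, whose entry $-1/[v_{i}(p_{i}-v_{i})]$ pins the natural scale at coordinate $i$ to $\min\{v_{i},p_{i}-v_{i}\}$. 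A standard computation then yields $\mathbb{P}(\mathbf{v})/\mathbb{P}(\mathbf{m}^{*})\le\exp[-\mu(S)\,\epsilon^{2}/12]$ for every profile $\mathbf{v}$ violating the concentration event on some coordinate. Bounding the number of bad profiles crudely by $\prod_{i}p_{i}\le n^{2k}$ and absorbing that count into the exponent using $\mu(S)\lambda(S)=5k\log n$ produces exactly $\delta=2\exp[-\mu(S)(\epsilon^{2}/12-\lambda(S))]$. Convexity of $S$ enters here: it guarantees that a lattice point of $\mathbf{m}(S)$ sits within $O(1)$ of the continuous maximizer $\mathbf{m}^{*}$, so that $\mathbb{P}(\mathbf{m}^{*})$ serves as a legitimate reference and the induced measure does not split across genuinely distant near-optimizers.

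For step (B), condition on a profile $\mathbf{v}$ satisfying the concentration event. Within each part $P_{i}$ the conditional law of $G$ restricted to $P_{i}$ is exactly $G(p_{i},v_{i})$, i.e.\ a uniformly random $v_{i}$-subset. The classical coupling recalled in~\eqref{eq-sandwich}, applied separately within each part with target density $q_{i}^{*}$, gives a joint law under which this subset contains every edge of an independent Bernoulli$((1-\epsilon)q_{i}^{*})$ sample and is contained in an independent Bernoulli$((1+\epsilon)q_{i}^{*})$ sample; the concentration bound $|v_{i}-m_{i}^{*}|\le(\epsilon/3)m_{i}^{*}$ supplies exactly the slack the classical sandwich requires. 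Composing these couplings independently across $i$ with the two-stage decomposition of $U(S)$ yields the desired global coupling with $G^{\pm}\sim G(n,(1\pm\epsilon)\mathbf{Q}^{*})$, with residual failure mass polynomially small and already absorbed in the $5k\log n$ budget inside $\mu(S)\lambda(S)$.

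The main obstacle is step (A). The difficulty is that $H_{\mathcal{P}}$ has no uniform modulus of strict concavity: its Hessian blows up near the boundary of the hypercube, so there is no single quadratic lower bound valid over all of $\mathrm{Conv}(\mathbf{m}(S))$. This forces the thickness $\mu(S)$ to dictate the scale of concentration and in turn forces the hypothesis $\epsilon>\sqrt{12\lambda(S)}$, which is precisely the condition ensuring that the quadratic entropy gain $\mu(S)\epsilon^{2}/12$ dominates the enumeration penalty $\mu(S)\lambda(S)=5k\log n$. A secondary subtlety is bridging the discrete-to-continuous gap between $\mathbf{m}^{*}\in\R^{k}$ and $\mathbf{m}(S)\subseteq\nats^{k}$ without losing sharpness; this is precisely what the convexity hypothesis is designed to handle (ruling out the $(p/2,p/3)$-vs-$(p/3,p/2)$ swap example flagged in the text), together with the resolution parameter $r(S)$ from~\eqref{eq-resol} used to quantize the concentration band.
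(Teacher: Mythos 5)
Your proposal mirrors the paper's proof almost exactly: both decompose $U(S)$ into the two-stage sampling (edge profile first, then independent uniform subsets within each part), invoke the concentration theorem for the edge profile around $\mathbf{m}^*$ via the Stirling-to-$H_{\mathcal{P}}$ approximation plus the diagonal Hessian bound, and then apply the classical $G(n,m)$-to-$G(n,p)$ coupling part-by-part, combining the two failure sources with a union bound whose $n^{2k}$ enumeration cost is absorbed by $\mu(S)\lambda(S)=5k\log n$. The only discrepancies are bookkeeping: you place the concentration band at scale $\epsilon/3$ where the paper works at scale $\epsilon$, and you characterize the per-part coupling failure as "polynomially small" when it is in fact of the same exponential order as the concentration failure and is folded into the generous constant $12$ (the paper bounds both terms by $\exp[-\mu(S)(\epsilon^2/12-\lambda(S))]$ using $\epsilon<1/2$), but these do not alter the structure of the argument.
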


\begin{remark}
As a sanity check we see that as soon as  $m \gg \log n$, Theorem~\ref{thm-sandwich} recovers the sandwichability of $G(n,m)$ by $G(n,p(m))$ as sharply as the Chernoff bound, up to the constant factor $\frac{1}{12}$ in the exponent.
\end{remark}

Theorem~\ref{thm-sandwich} follows by analyzing the natural coupling between the uniform measure on $S$ and the product measure corresponding to the entropic optimizer $\mathbf{m}^{*}$. Our main technical contribution is  Theorem~\ref{thm-concentration} below, a concentration inequality for $\mathbf{m}(S)$ when $S$ is a convex symmetric set. The \emph{resolution}, $r(S)$, defined in~\eqref{eq-resol} above, reflects the smallest \emph{concentration width} that can be proved by our theorem. When $\lambda(S)\ll 1$, as required for the theorem to be meaningfully applied, it scales optimally as $\sqrt{\lambda(S)}$.

\begin{theorem}\label{thm-concentration}
Let $\mathcal{P}$ be any edge-partition, let $S$ be any $\mathcal{P}$-symmetric convex set, and let $\mathbf{m}^{*}$ be the entropic optimizer of $S$. For all $\epsilon> r(S)$, if $G\sim U(S)$, then 
\begin{equation}
\mathbb{P}_{S}\left(\left|\mathbf{m}(G)-\mathbf{m}^{*}\right|\leq \epsilon \tilde{\mathbf{m}}^{*}\right) \geq 1-  \exp\left(-\mu(S)\left(\frac{\epsilon^{2}}{1+\epsilon}-\lambda(S) \right) \right) \enspace ,
\end{equation}
where $\mathbf{x}\leq \mathbf{y}$ means that  $x_{i}\leq y_{i}$ for all $i\in [k]$, and $\tilde{m}_{i}=\min\{m^*_{i},p_{i}-m^*_{i}\}$.
\end{theorem}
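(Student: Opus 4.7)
The plan is to bound the induced measure $\mathbb{P}_{S}(\mathbf{v}) = e^{\text{\ENT}(\mathbf{v})}/|S|$ on the bad set $B_\epsilon := \{\mathbf{v}\in\mathbf{m}(S) : |v_i-m^*_i|>\epsilon\tilde m^*_i \text{ for some } i\in[k]\}$, by comparing the total Boltzmann weight of $B_\epsilon$ against the weight of a single reference profile $\mathbf{m}_{0} \in \mathbf{m}(S)$ close to $\mathbf{m}^{*}$. Because convexity gives $\mathrm{Conv}(\mathbf{m}(S))\cap\nats^{k} = \mathbf{m}(S)$ and $\mathbf{m}^{*}$ lies in the convex hull, any integer point within coordinatewise distance $1$ of $\mathbf{m}^{*}$ already sits in $\mathbf{m}(S)$; this is exactly the role of the convexity hypothesis, and it delivers the lower bound $|S|\geq e^{\text{\ENT}(\mathbf{m}_{0})}$.

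First, I would pass from $\text{\ENT}$ to its smooth surrogate $H_{\mathcal{P}}$ via Stirling. The upper bound $\log\binom{p}{v}\leq p\,h(v/p)$ gives $\text{\ENT}(\mathbf{v})\leq H_{\mathcal{P}}(\mathbf{v})$ for all $\mathbf{v}$, while the matching lower bound gives $\text{\ENT}(\mathbf{m}_{0})\geq H_{\mathcal{P}}(\mathbf{m}_{0})-\tfrac{1}{2}\sum_{i}\log(2\pi\tilde m_{0,i})$; since $\mathbf{m}_{0}$ is within $1$ of $\mathbf{m}^{*}$ in each coordinate and $\tilde m^*_i\geq \mu$, a one-step Lipschitz estimate of $H_{\mathcal{P}}$ absorbs $|H_{\mathcal{P}}(\mathbf{m}_{0})-H_{\mathcal{P}}(\mathbf{m}^{*})|$ into a single $O(k\log n)$ correction.

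Next, I would exploit the concave structure. $H_{\mathcal{P}}$ is strictly concave on $\prod_{i}[0,p_{i}]$ and $\mathbf{m}^{*}$ maximizes it on the convex domain $\mathrm{Conv}(\mathbf{m}(S))$, so the first-order optimality condition $\nabla H_{\mathcal{P}}(\mathbf{m}^{*})\cdot(\mathbf{v}-\mathbf{m}^{*})\leq 0$ combined with the binary identity $p\,h(q)-p\,h(q_{0})-p\,h'(q_{0})(q-q_{0}) = -p\,D(q\,\|\,q_{0})$ yields, for every $\mathbf{v}\in\mathrm{Conv}(\mathbf{m}(S))$,
\[
H_{\mathcal{P}}(\mathbf{v})-H_{\mathcal{P}}(\mathbf{m}^{*}) \;\leq\; -\sum_{i=1}^{k} p_{i}\,D\!\bigl(v_{i}/p_{i}\,\|\,m^*_{i}/p_{i}\bigr).
\]
For $\mathbf{v}\in B_\epsilon$, the violating coordinate $i$ already contributes, by the sharp multiplicative Chernoff lower bound on binary KL, at least $p_{i} D(v_{i}/p_{i}\,\|\,m^*_{i}/p_{i})\geq \tilde m^*_{i}\cdot\epsilon^{2}/(1+\epsilon)\geq \mu\epsilon^{2}/(1+\epsilon)$, so every bad profile has entropy deficit at least $\mu\epsilon^{2}/(1+\epsilon)$ relative to $H_{\mathcal{P}}(\mathbf{m}^{*})$. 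A crude enumeration $|B_\epsilon|\leq \prod_{i}(p_{i}+1)\leq n^{2k}$ then collects everything into a single exponential, $\mathbb{P}_{S}(B_\epsilon)\leq \exp\!\bigl(O(k\log n)-\mu\epsilon^{2}/(1+\epsilon)\bigr)$, and the definition $\lambda=5k\log n/\mu$ is sized precisely so that all the $k\log n$-sized corrections fit inside the single $\mu\lambda$ slack, recovering the theorem.

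The main obstacle I anticipate is keeping the constants tight: simultaneously packing the Stirling remainder, the Lipschitz shift from $\mathbf{m}^{*}$ to $\mathbf{m}_{0}$, and the counting factor $|B_\epsilon|$ into the single $5k\log n$ budget; and obtaining the sharp exponent $\epsilon^{2}/(1+\epsilon)$ rather than the more familiar one-sided Chernoff form $\epsilon^{2}/(2+\epsilon)$, which appears to force analyzing the two sides of $m^*_{i}$ separately and taking the worse of $m^*_{i}$ and $p_{i}-m^*_{i}$, precisely the reason $\tilde m^*_{i}$ rather than $m^*_{i}$ appears on the right-hand side of the theorem.
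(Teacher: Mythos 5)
Your proof follows the same plan as the paper's: the same point-bound $\log|S|\geq \text{\ENT}(\mathbf{m}_0)$ at a lattice point near $\mathbf{m}^*$, the same passage to the smooth surrogate $H_{\mathcal{P}}$ via Stirling, the same use of concavity and first-order optimality of $\mathbf{m}^*$ on $\mathrm{Conv}(\mathbf{m}(S))$ to discard the linear term, and the same crude $n^{2k}$ enumeration of bad profiles packed into the $5k\log n$ slack (these are precisely Lemma~\ref{lm:distance}, Proposition~\ref{classes}, and the union bound in the paper). The one substantive difference is presentational: you express the second-order decay exactly via the identity $h(q)-h(q_0)-h'(q_0)(q-q_0)=-D(q\,\|\,q_0)$, whereas the paper invokes the Taylor remainder theorem at an unknown intermediate point $\mathbf{z}$ and then bounds $1/(z_i(1-z_i))$, producing the denominator $\max\{\tilde{m}^*_i,\tilde{w}_i\}$. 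The KL form is cleaner and makes the Chernoff structure explicit, but it is the same computation.

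The soft spot you flag at the end is real, and it is inherited from the paper rather than introduced by you. The inequality $p_i D(v_i/p_i\,\|\,m^*_i/p_i)\geq \tilde{m}^*_i\,\epsilon^2/(1+\epsilon)$ is not a standard KL lower bound: for $q=(1+\epsilon)q_0$ with $q_0$ small, $D(q\|q_0)\approx q_0\big[(1+\epsilon)\ln(1+\epsilon)-\epsilon\big]\geq q_0\,\epsilon^2/(2+2\epsilon/3)$, which is weaker by roughly a factor of two; analyzing the lower tail separately, as you suggest, gives at best $q_0\,\epsilon^2/2$ and does not recover it. The paper's own equation~\eqref{eq:mitsakos} silently omits the $\frac12$ in front of the Hessian term of the Taylor remainder, which is exactly where that factor vanishes on their side; restoring it would give $\epsilon^2/(2(1+\epsilon))$ in both arguments, so you should not assert your KL bound as a known fact without proof. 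Likewise, your claim that convexity places a point of $\mathbf{m}(S)$ within $L_\infty$-distance $1$ of $\mathbf{m}^*$ is not automatic (a thin, oblique convex hull of lattice points can leave the continuous maximizer far from any of its lattice points), but the paper makes the same tacit assumption when it speaks of ``rounding to the nearest integral point,'' so this, too, is a shared rather than a new gap.
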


The intuition driving concentration is that as \emph{thickness} increases two phenomena occur: (i) vectors close to $\mathbf{m}^{*}$ capture a larger fraction of the measure, and (ii) the decay in entropy away from $\mathbf{m}^{*}$ becomes steeper. These joint forces compete against the probability mass captured by vectors ``away" from the optimum. The point were they prevail corresponds to $\lambda(S)\ll 1$ or, equivalently, $\mu(S)\gg 5k\log(n)$. Assuming $\lambda(S)\ll 1$ the probability bounds we give scale as $n^{-\Omega(k \epsilon^{2})}$. Without  assumptions on $S$, and up to the constant 5 in~\eqref{def:cond}, this is \emph{sharp}. 

The concentration and sandwich theorems dramatically enable the study of monotone properties under the uniform measure over convex-symmetric sets. Going beyond monotone properties, we would like to enable the study of more local (involving  a few edges) events. For example, we would like to be able to make statements about moments of subgraph counts (triangle, cliques, cycles) and other local-graph functions.

\section{Technical Overview}\label{sec:technical}

In this section, we present an overview of the technical work involved in the proof of our theorems. Most of the work lies in the concentration result, Theorem~\ref{thm-concentration}.\smallskip

%, where we aim we aim to characterize the vectors $\mathbf{v}\in \mathbf{m}(S)$. In order to achieve that, we study the induced probability distribution on $\mathbf{m}(S)$, which can be written in terms of the entropy $\text{\ENT}(\mathbf{m})$ of $\mathbf{v}$, and the partition function $Z(S)=|S|$ as $\P_{S}(\mathbf{v}) = \frac{1}{|S|}\exp\left(\text{\ENT}(\mathbf{v})\right)$.\\*[-0.9cm]

\noindent{\bf Concentration.} The general idea is to identify a high-probability subset $\mathcal{L}\subseteq \mathbf{m}(S)$ by integrating the probability measure around the entropy-maximizing profile $\mathbf{m}^{*}$. Since ultimately our goal is to couple the uniform measure with a product measure, we need to establish concentration for every part of the edge-partition, i.e., in every coordinate. There are two main obstacles to overcome: (i) we do not know $|S|$,  and (ii) we must integrate the measure outside  $\mathcal{L}$ while concurrently quantifying  the decrease in entropy as a function of the $L_{\infty}$ distance from the maximizer $\mathbf{m}^{*}$. Our strategy to resolve the above obstacles is:\smallskip

\emph{Size of $S$.} We bound $\log|S|$ from below by  the contribution to $\log|S|$ of the entropic optimal edge-profile $\mathbf{m}^{*}$, thus upper-bounding the probability of every $\mathbf{v} \in \mathbf{m}(S)$ as
\begin{equation}\label{eq:basic}
\log \P_{S}(\mathbf{v})= \text{\ENT}(\mathbf{v}) - \log(|S|) \leq \text{\ENT}(\mathbf{v}) - \text{\ENT}(\mathbf{m}^{*}) \enspace .
\end{equation}
This is the crucial step that opens up the opportunity of relating the probability of a vector $\mathbf{v}$ to the distance $\lVert \mathbf{v}-\mathbf{m}^{*}\rVert_{2}$ through analytic properties of entropy. The importance of this step stems from the fact that all information about $S$ resides in $\mathbf{m}^{*}$ due to our choice of solving the optimization problem on the convex hull. 
%We show next that this can be a rather loose bound, .
%and introduces a significant loss as can be easily shown by taking $S=\mathcal{G}_{n}$ and $\mathcal{P}$ any balanced partition with $k$ parts.}\footnote{Why are we saying this negative stuff? In particular, it somehow is really jarring against our earlier claims that our results are optimal up to constants.}  
\begin{proposition}\label{prop-lower}
There exists a partition $\mathcal{P}$ with $k$ parts and a convex  $\mathcal{P}$-symmetric set $S\subseteq \mathcal{G}_{n}$ such that 
%the error introduced by approximating the partition function is  
$\log(|S|) -\text{\ENT}(\mathbf{m}^{*}) = \Omega(k \log(n))$.
\end{proposition}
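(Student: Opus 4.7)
My plan is to exhibit an explicit convex $S$ whose edge-profile set is a ``fat'' box around the entropy maximizer $\mathbf{m}^{*}$, and to exploit the Gaussian flatness of $\text{\ENT}$ near $\mathbf{m}^{*}$: the $\prod_{i}\sqrt{p_{i}}$ lattice points inside this box each contribute to $|S|$ within a constant factor of $e^{\text{\ENT}(\mathbf{m}^{*})}$. If the parts have polynomial size in $n$, this excess multiplicity alone contributes $\Omega(k\log n)$ to $\log|S|$, yielding the claim.

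Concretely, set $k=\lfloor\sqrt{n}\rfloor$ and partition the $N=\binom{n}{2}$ potential edges into $k$ near-equal parts $P_{1},\ldots,P_{k}$ of common size $p:=\lfloor N/k\rfloor=\Theta(n^{3/2})$, and define
$$
S \;=\; \bigl\{\,G\in\mathcal{G}_{n} \,:\, |m_{i}(G)-\lfloor p/2\rfloor|\le \lfloor\sqrt{p}/2\rfloor \text{ for all } i\in[k]\,\bigr\}.
$$
By construction $S$ is $\mathcal{P}$-symmetric, and $\mathbf{m}(S)$ is exactly the set of integer points inside an axis-aligned box, so $\mathrm{Conv}(\mathbf{m}(S))\cap\nats^{k}=\mathbf{m}(S)$ and $S$ is convex. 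The surrogate $H_{\mathcal{P}}(\mathbf{v})=\sum_{i=1}^{k} p\,H_{2}(v_{i}/p)$ is strictly concave with unconstrained maximum at the interior point $(p/2,\ldots,p/2)\in\mathrm{Conv}(\mathbf{m}(S))$, so $\mathbf{m}^{*}=(p/2,\ldots,p/2)$ and $\text{\ENT}(\mathbf{m}^{*})=k\log\binom{p}{\lfloor p/2\rfloor}+O(1)$.

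The lower bound on $|S|$ is then a routine Stirling/local-limit computation. For every $\mathbf{v}\in\mathbf{m}(S)$ each coordinate satisfies $v_{i}/p=1/2+O(1/\sqrt{p})$; expanding $H_{2}$ around $1/2$ (recall $H_{2}''(1/2)=-4$ in natural log) gives $p\,H_{2}(v_{i}/p)\ge p\log 2-O(1)$, and combined with the $-\tfrac{1}{2}\log(\pi p/2)$ correction in Stirling's formula this yields $\log\binom{p}{v_{i}}\ge\log\binom{p}{\lfloor p/2\rfloor}-O(1)$. Multiplying over the $k$ coordinates gives $\text{\ENT}(\mathbf{v})\ge\text{\ENT}(\mathbf{m}^{*})-O(k)$, and summing over the at least $(\sqrt{p}/2)^{k}$ lattice points in $\mathbf{m}(S)$ produces
$$
\log|S| \;\ge\; \text{\ENT}(\mathbf{m}^{*}) + \tfrac{k}{2}\log p - O(k) \;=\; \text{\ENT}(\mathbf{m}^{*}) + \Omega(k\log n),
$$
since $\log p=\Theta(\log n)$. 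I do not anticipate a conceptual obstacle: the only technical point requiring care is verifying that the Taylor expansion of $H_{2}$ about $1/2$ remains accurate for the largest allowed deviation $|v_{i}/p-1/2|=O(1/\sqrt{p})=o(1)$, which is immediate.
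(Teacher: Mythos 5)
Your proof is correct, but it takes a different (and in some ways more informative) route than the paper's. The paper simply sets $S=\mathcal{G}_{n}$, so that $\log|S|=\binom{n}{2}\log 2$ is known exactly, and then compares this against $\text{\ENT}(\mathbf{m}^{*})=\sum_{i}\log\binom{p_{i}}{p_{i}/2}$ via Stirling: the $\Theta\bigl(\tfrac{1}{2}\log p_{i}\bigr)$ defect per part in $\log\binom{p_{i}}{p_{i}/2}$ relative to $p_{i}\log 2$ accumulates to $\Omega(k\log n)$. You instead take $S$ to be a thin $\Theta(\sqrt{p})$-wide box in each coordinate, lower-bound $\log|S|=\log\sum_{\mathbf{v}\in\mathbf{m}(S)}e^{\text{\ENT}(\mathbf{v})}$ by noting that the $(\sqrt{p}/2)^{k}$ lattice points in the box are all within $O(k)$ of the maximal entropy (this is the correct use of the second-order Taylor flatness at scale $\sqrt{p}$), and read off the $\tfrac{k}{2}\log p$ surplus directly from the multiplicity. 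The two proofs exploit exactly the same Gaussian $\sqrt{p}$-window phenomenon; the paper's is shorter because $|S|$ is explicit, while yours has the merit of showing that the $\Omega(k\log n)$ gap already appears for a small, ``minimally thick'' convex symmetric set rather than only for the degenerate choice $S=\mathcal{G}_{n}$, which strengthens the paper's subsequent claim that the loss is unavoidable without further structural assumptions. The only cosmetic nits: if $k\nmid N$ the parts cannot literally all have size $p=\lfloor N/k\rfloor$, so either choose $n,k$ with $k\mid\binom{n}{2}$ or allow $p_{i}\in\{p,p+1\}$ (which changes nothing); and when $p$ is odd $\mathbf{m}^{*}=(p/2,\ldots,p/2)$ is not integral, which you already absorb into the $O(1)$ per coordinate.
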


Proposition~\ref{prop-lower} demonstrates that unless one utilizes specific geometric properties of the set $S$ enabling integration around $\mathbf{m}^{*}$, instead of using a point-bound for  $\log |S|$, a loss of $\Omega(k \log(n))$ is unavoidable. In other words, either one makes more assumptions on $S$ besides symmetry and ``convexity", or this error term is optimal as claimed.\smallskip
 
\emph{Distance bounds:} To bound from below the rate at which entropy decays as a function of the component-wise distance from the maximizer $\mathbf{m}^{*}$, we first approximate $\text{\ENT}(\mathbf{v})$ by the corresponding binary entropy to get a smooth function. Exploiting the separability, concavity and differentiability of entropy we obtain component-wise distance bounds using a second-order Taylor approximation. At this step we also lose a cumulative factor of order $3k\log n $ stemming from Stirling approximations and the subtle point that the maximizer $\mathbf{m}^{*}$ might not be an integer point. The constant $3$ can be improved, but in light of Proposition \ref{prop-lower} this would be pointless and complicate the proof unnecessarily.\smallskip

\emph{Union bound:} Finally, we integrate the obtained bounds  outside the set of interest by showing that even if all ``bad" vectors where placed right at the boundary of the set, where the lower bound on the decay of entropy is smallest, the total probability mass would be exponentially small. The loss incurred at this step is of order $2k\log n $, since there are at most $n^{2k}$ bad vectors.\smallskip

\emph{Relaxing Conclusions.} Our  theorem seeks to provide concentration simultaneously for all parts. That motivates the definition of thickness parameter $\mu(S)$ as the minimum distance  from the natural boundary  of the number of edges that any part $i$ has at the optimum $\mathbf{m}^{*}$. Quantifying everything in terms of $\mu(S)$ is a very conservative requirement. For instance, if we define the set $S$ to have no edges in a particular part of the partition, then $\mu(S)$ is $0$ and our conclusions become vacuous. Our proofs in reality generalize, to the case where we confine our attention only to a subset $I\subseteq [k]$ of blocks in the partition.  In particular, if one defines $I^{*}$ as the set of parts whose  individual \emph{thickness} parameter $\tilde{m}_{i}=\min\{m_{i}, p_{i}-m_{i}\}$ is greater than $5k\log n $, both theorems hold for the subset of edges $\cup_{i\in I^{*}}P_{i}$. In essence that means that for every class that is ``well-conditioned", we can provide concentration of the number of edges and approximate monotone properties of only those parts by coupling them with product measures.\smallskip

\emph{Relaxing Convexity.} Besides partition symmetry that comprises our main premise and starting point, the second main assumption made about the structure of $S$ is \emph{convexity}. In the proof (Lemma \ref{lm:distance} in Section~\ref{sec:proof}) convexity is used only to argue that: (i) the maximizer $\mathbf{m}^{*}$ will be close to some vector in $\mathbf{m}(S)$, and (ii) that the first order term in the Taylor approximation of the entropy is always negative. However, since the optimization problem was defined on the convex hull of $\mathbf{m}(S)$, in point (ii) above we are only using  convexity of $\mathrm{Conv}(\mathbf{m}(S))$ and not of the set $S$. Thus, the essential requirement on $\mathcal{P}$-symmetric sets  is  \emph{approximate unimodality}. 

\begin{definition}
A $\mathcal{P}$-symmetric set $S$ is called $\Delta$-unimodal if the solution $\mathbf{m}^{*}$ to the entropy optimization problem defined in Section 2, satisfies:
\begin{equation}
d_{1}(\mathbf{m}^{*},S):=\min_{\mathbf{v}\in \mathbf{m}(S)}\lVert \mathbf{m}^{*}- \mathbf{v}\rVert_{1} \leq \Delta
\end{equation}  
\end{definition}

Convexity essentially implies that the set $S$ is $k$-unimodal as we need to round each of the $k$ coordinates of the solution to the optimization problem to the nearest integer. Under this assumption, all our results apply by only changing the condition number of the set to $\lambda(S)=\frac{(2\Delta + 3k)\log n }{\mu(S)}$. In this extended abstract, we opted to present our results by using the familiar notion of convexity to convey intuition on our results and postpone the presentation in full generality for the full version of the paper.\smallskip

\noindent\textbf{Coupling.} To prove Theorem~\ref{thm-sandwich} using our concentration result, we argue as follows. Conditional on the edge-profile, we can couple the generation of edges in different parts independently by a similar process  as in the $G(n,m)$ to $G(n,p)$ case. Then, using a union bound we can bound the probability that all couplings succeed given an appropriate $\mathbf{v}$. Finally, using the concentration theorem we show that sampling an appropriate edge-profile happens with high probability. 

\section{Discussion}\label{sec:conclavio}

In studying the uniform measure over symmetric sets of graphs our motivation was twofold. On one hand was the desire to explore the extent to which symmetric sets (properties) can be studied via product measure approximations (instead of assuming product form  and studying properties of the resulting ad-hoc random graph model). On the other hand, we wanted to provide a framework of random graph models that arise by constraining \emph{only the support} of the probability distribution and not making specific postulations about the finer probabilistic structure, e.g., at the level of edges. An unanticipated result of our work is a connection to Machine Learning (ML). \smallskip

\subsection{Connection to Machine Learning} In Machine Learning, a probabilistic model for the data is assumed and exploited to perform some inferential task (e.g. classification, clustering, estimation). Logistic Regression is a widely used modeling, estimation and classification tool. In the simplest setting, there are $N$ binary outcomes $Y_{1},\ldots, Y_{N}$ and each $Y_{i} \in \{0,1\}$ is associated  with a feature vector $\pmb{X}_{i}\in \R^{\ell}$. A statistical model is constructed by assuming that there exists an (unknown) vector $\pmb{\beta}\in \R^{\ell}$ such that: (i)  the $N$ outcomes are \emph{mutually independent} given $\pmb{\beta}$, and (ii) the marginal  probability of an outcome is given by $\P_{\pmb{\beta}}(Y_{i} = 1) = \left[1+\exp(\pmb{X}_{i}^{T}\cdot \pmb{\beta})\right]^{-1}$. Thus, given the outcomes and features one seeks to find $\pmb{\beta}$ by Maximum Likelihood Estimation (MLE). 

In the context of network modeling~\cite{Goldenberg}, the $N=\binom{n}{2}$ binary outcomes correspond to edges and the $N$ feature vectors can be either known or latent~\cite{hoff2002latent}. In both the Classification and Network Modeling settings, it is natural to wonder whether both assumptions (independence and parametric form of the likelihood) are realistic and not far-fetched idealizations. Our work provides, to the best of our knowledge, the \emph{first justification for both assumptions}, when the features (covariates) $\left\{\pmb{X}_{i}\right\}_{i\leq N}$ take only $k$ distinct values. In fact, let $\tilde{\pmb{X}}_{q}$ for $q\in[k]$ denote the common feature vector of outcomes from part $q$ and let $\tilde{\pmb{X}}=[\tilde{\pmb{X}}_{1}\ldots \tilde{\pmb{X}}_{k}]$. % be the corresponding matrix of covariates.  
Our results say that  logistic regression is the product measure approximation of the uniform measure on the set $
S=S(\tilde{\pmb{X}},\pmb{c}(\pmb{\beta}))=\left\{x\in\{0,1\}^{N}| \ \tilde{\pmb{X}}\cdot \pmb{m}(x) \leq \pmb{c} \right\}$\footnote{For some $\pmb{c} \in \R^{\ell}$ such that $
\pmb{\beta}$ is the Lagrangian multiplier of an entropy optimization problem over $S$. } 
and that the \emph{logistic approximation}, amounting to assumptions (i)-(ii) above, is valid whenever $\mathbf{m}(x)$ is in each coordinate $\Omega(k\log n)$ far from triviality, i.e. there are \emph{enough positive and negative labels} from each part.
An in depth exploration of such connections, i.e., between widely used techniques in Machine Learning and the uniform measure over sets, is a promising direction that we pursue in future work.

\section{Proof of Theorem 2}\label{sec:proof}

In this section we prove Theorem~\ref{thm-concentration}. For the purposes of the proof we are going to employ, instead of $\mathbf{m}(x)$, a different parametrization  in terms of the \emph{edge-profile} $\mathbf{a}(x)=(a_{1}(x),\ldots,a_{k}(x))\in[0,1]^{k}$ where $a_{i}(x)=m_{i}(x)/p_{i}$. This will be convenient both in calculations as well as conceptually as $a_{i}(x)$ represents the effective edge density of a part $P_{i}$ in the partition. We start by approximating  the entropy of an edge-profile via the $\mathcal{P}$-entropy.

\begin{definition}
Given a partition $\mathcal{P}$, the $\mathcal{P}$-entropy is define for every  $\mathbf{a}\in[0,1]^{k}$ as
 \begin{equation}
 H_{\mathcal{P}}(\mathbf{a})=-\sum_{i=1}^{k}p_{i}\left[a_{i}\log a_{i}+(1-a_{i})\log(1-a_{i})\right]
\end{equation}
\end{definition}
\noindent The $\mathcal{P}$-entropy is simply the entropy of the product measure defined over edges through $\mathbf{a}$. We slightly abuse the notation and also define the $\mathcal{P}$-entropy in terms of the edge-profile:
 \begin{equation}
 \label{ent-prof}
 H_{\mathcal{P}}(\mathbf{v})=-\sum_{i=1}^{k}\left[v_{i}\log\left(\frac{v_{i}}{p_{i}} \right) +(p_{i}-v_{i})\log\left(\frac{p_{i}-v_{i}}{p_{i}} \right)\right]
\end{equation}
Let $\mathcal{M}_{\mathcal{P}}:=\{0,\ldots,p_{1}\} \times\ldots \times \{0,\ldots,p_{k}\}$ be the space of all possible vectors $\mathbf{m}$. In what follows we sometimes supress the dependence  of the quantities in $\mathbf{m}$ or $\mathbf{a}$ to ease the notation.
\begin{lemma} \label{lm:stirling}
Let $\mathbf{m}\in \mathcal{M}_{\mathcal{P}}$ be an edge-profile and $\mathbf{a}\in[0,1]^{k}$ be the corresponding probability profile, then:
\[
\text{\ENT}(\mathbf{m})= \mathcal{H}_{\mathcal{P}}(\mathbf{a})- \gamma(n)
\]
where $0 \leq \gamma(n) \leq k\log n$ is a term that approaches zero as $m_i$ and $p_i-m_i$ tend to infinity.
\end{lemma}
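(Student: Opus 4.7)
The plan is to apply Stirling's approximation term-by-term to the sum $\text{\ENT}(\mathbf{m}) = \sum_{i=1}^{k} \log \binom{p_{i}}{m_{i}}$, identify the leading order contribution with the binary entropy expression defining $H_{\mathcal{P}}(\mathbf{a})$, and then sharply control the residual. Since both $\text{\ENT}$ and $H_{\mathcal{P}}$ are separable over the parts, it suffices to show that for each $i$ there is a quantity $\gamma_{i}(n)$ with $0 \le \gamma_{i}(n) \le \log n + O(1)$ satisfying $\log\binom{p_{i}}{m_{i}} = -p_{i}\bigl[a_{i}\log a_{i} + (1-a_{i})\log(1-a_{i})\bigr] - \gamma_{i}(n)$, and then take $\gamma(n) := \sum_{i} \gamma_{i}(n)$.

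The main observation is that the classical bounds on binomial coefficients in terms of the binary entropy function $h(a) = -a\log a - (1-a)\log(1-a)$ give the two-sided control we need. For the upper bound $\gamma_i(n) \ge 0$, I would use the elementary inequality $\binom{p_i}{m_i} \le e^{p_i h(a_i)}$, which follows from the fact that the $m_i$-th term in the binomial expansion $\sum_{j} \binom{p_i}{j} a_i^{j}(1-a_i)^{p_i-j} = 1$ equals $\binom{p_i}{m_i} e^{-p_i h(a_i)}$ and is at most $1$. Taking logs yields $\log\binom{p_i}{m_i} \le p_i h(a_i)$, i.e., $\gamma_i(n) \ge 0$. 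For the upper bound on $\gamma_i$, the complementary observation is that the largest term in the same binomial expansion is within a factor of $p_i + 1$ of the total mass $1$, giving $\binom{p_i}{m_i} \ge e^{p_i h(a_i)}/(p_i + 1)$; combined with the trivial bound $p_i \le \binom{n}{2} < n^2/2$, this yields $\gamma_i(n) \le \log(p_i + 1) \le 2\log n$. A sharper version using the two-sided Stirling bound $\sqrt{2\pi n}(n/e)^{n} \le n! \le \sqrt{2\pi n}(n/e)^{n} e^{1/(12n)}$ trims this to $\gamma_i(n) \le \tfrac{1}{2}\log p_i + O(1) \le \log n$, summing to the stated $\gamma(n) \le k\log n$.

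For the vanishing claim as $m_i$ and $p_i - m_i$ grow, I would expand the refined Stirling bounds explicitly on $\binom{p_i}{m_i} = p_i!/(m_i!(p_i - m_i)!)$ to obtain the exact asymptotic
\[
\log\binom{p_i}{m_i} = p_i h(a_i) - \tfrac{1}{2}\log\!\left(\tfrac{2\pi m_i(p_i-m_i)}{p_i}\right) + O\!\left(\tfrac{1}{m_i} + \tfrac{1}{p_i - m_i}\right),
\]
which isolates $\gamma_i(n) = \tfrac{1}{2}\log\!\left(\tfrac{2\pi m_i(p_i-m_i)}{p_i}\right) + o(1)$. Since $\tfrac{m_i(p_i - m_i)}{p_i} \le \min(m_i, p_i - m_i)$, the whole quantity stays bounded by $\tfrac{1}{2}\log p_i + O(1)$; but if instead we take the limit with $m_i, p_i - m_i \to \infty$ while keeping $a_i$ fixed and letting $p_i \to \infty$ proportionally, this term is negligible relative to $p_i h(a_i)$, which is the sense in which ``$\gamma(n)$ approaches zero'' relative to $H_{\mathcal{P}}$. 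Finally, the degenerate cases $m_i \in \{0, p_i\}$ are handled by the convention $0 \log 0 = 0$: both $\log\binom{p_i}{m_i} = 0$ and the corresponding summand of $H_{\mathcal{P}}$ vanish, giving $\gamma_i(n) = 0$ trivially. The only mildly delicate step is making sure the lower bound on $\binom{p_i}{m_i}$ is uniform across the range $1 \le m_i \le p_i - 1$, which is where the $(p_i+1)$-factor argument (or equivalently the $\sqrt{n/(k(n-k))}$ factor from refined Stirling) is essential.
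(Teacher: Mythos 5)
Your proof is correct and takes essentially the same route as the paper: both expand each $\log\binom{p_i}{m_i}$ via Stirling's formula to isolate the binary-entropy term $p_i h(a_i)$ plus a lower-order correction in $[0,\log n]$, then sum over the $k$ parts. The only cosmetic difference is that you obtain the sign $\gamma_i \ge 0$ cleanly from the binomial-theorem/largest-term argument rather than from Stirling's remainder $\theta_n$ directly, which is a nice touch but not a different method.
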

\begin{proof}
We begin by providing the first order Stirling approximation for a single term of the form $\log \binom{p_{i}}{m_{i}}$. Specifically, since $m_{i}=p_{i}a_{i}$ and by using $\log n!  =  n\log n - n + \frac{1}{2}\log n  + \theta_n$, where $\theta_n \in (0,1]$, we get:
\begin{eqnarray*}
\log \binom{p_{i}}{m_{i}}&=&\log(p_{i}!)-\log(m_{i}!)-\log((p_{i}-m_{i})!) \\
		 &=& -p_{i}\left[a_{i} \log a_{i} +(1-a_{i})\log(1- a_{i})\right] - \delta_n(a_i,p_i) \enspace ,
\end{eqnarray*}
where $0 \le \delta_n(a_i,p_i) \le \log n$. Summing the derived expression for all $i\in[k]$ gives:
\begin{eqnarray*}
\text{\ENT}(\mathbf{m})&=& - \sum_{i=1}^{k}p_{i}\left[a_{i} \log a_{i} +(1-a_{i})\log(1- a_{i})\right]  - \sum_{i=1}^{k} \delta_n(a_i,p_i) \\
							&=& H_{\mathcal{P}}(\mathbf{a})- \gamma(n) \enspace ,
\end{eqnarray*}
where $0 \le \gamma(n) \le k \log n$ .
\end{proof}

Next, using the Taylor remainder theorem and the connection with $\mathcal{P}$-entropy, we obtain geometric estimates on the decay of entropy around $\mathbf{m}^{*}$.

\begin{theorem}[Taylor Remainder Theorem]
Assume that $f$ and all its partial derivatives are differentiable at every point of an open set $S \subseteq \R^{K}$. If $\mathbf{a}, \mathbf{b} \in S$ are such that the line segment $L(\mathbf{a},\mathbf{b})\subseteq S$, then there exists a point $\mathbf{z}\in L(\mathbf{a},\mathbf{b})$ such that:
\begin{equation}
f(\mathbf{b})-f(\mathbf{a})=\nabla f(\mathbf{a})^{T}(\mathbf{b}-\mathbf{a})+\frac{1}{2}(\mathbf{b}-\mathbf{a})^{T}\nabla^{2}f(\mathbf{z})(\mathbf{b}-\mathbf{a}) \enspace .
\end{equation}
\end{theorem}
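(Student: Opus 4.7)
The plan is to reduce the multivariate statement to the classical one-variable Taylor theorem with Lagrange remainder by slicing along the segment $L(\mathbf{a},\mathbf{b})$. Define the auxiliary function $g : [0,1] \to \reals$ by $g(t) = f(\mathbf{a} + t(\mathbf{b}-\mathbf{a}))$. Since $L(\mathbf{a},\mathbf{b}) \subseteq S$ and $S$ is open, $g$ is well defined on an open neighborhood of $[0,1]$, and by hypothesis both $f$ and its first partial derivatives are differentiable on $S$, which is exactly the regularity needed to apply the chain rule at both first and second order.

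First, I would compute $g'$ and $g''$ by the chain rule. Writing $\mathbf{h} = \mathbf{b} - \mathbf{a}$ and $\mathbf{z}(t) = \mathbf{a} + t\mathbf{h}$,
\begin{equation}
g'(t) = \nabla f(\mathbf{z}(t))^{T} \mathbf{h}, \qquad g''(t) = \mathbf{h}^{T} \nabla^{2} f(\mathbf{z}(t))\, \mathbf{h}.
\end{equation}
The first identity uses only differentiability of $f$ on $S$; the second uses the additional assumption that each partial derivative $\partial_{i} f$ is itself differentiable on $S$, which both ensures the existence of $\nabla^{2} f$ along the segment and legitimates a second application of the chain rule.

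Second, I would invoke the one-variable Taylor theorem with Lagrange remainder for $g$ on $[0,1]$: since $g$ is $C^{1}$ on $[0,1]$ and $g''$ exists on $(0,1)$, there is a $\xi \in (0,1)$ with $g(1) = g(0) + g'(0) + \tfrac{1}{2} g''(\xi)$. This one-dimensional fact is the standard consequence of Cauchy's mean value theorem applied to $\phi(t) = g(1) - g(t) - (1-t) g'(t)$ against $\psi(t) = (1-t)^{2}$, whose ratio of derivatives at some interior point equals $\tfrac{1}{2} g''(\xi)$.

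Finally, translating back with $g(0) = f(\mathbf{a})$, $g(1) = f(\mathbf{b})$, $g'(0) = \nabla f(\mathbf{a})^{T}\mathbf{h}$, and $\mathbf{z}(\xi) \in L(\mathbf{a},\mathbf{b})$ yields exactly the claimed identity. There is no real obstacle here; the only delicate point is checking that the slightly unusual hypothesis, namely that $f$ and each of its partial derivatives are differentiable on $S$ (without assuming the stronger condition $f \in C^{2}(S)$), really is just enough to validate the second-order chain rule computation above, which is precisely the reason the theorem is phrased in this minimalist way.
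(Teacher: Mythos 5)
The paper states the Taylor Remainder Theorem as a classical reference fact and does not supply a proof, so there is no ``paper's proof'' to compare against. Your argument is the standard and correct one: restrict $f$ to the segment via $g(t)=f(\mathbf{a}+t(\mathbf{b}-\mathbf{a}))$, compute $g'$ and $g''$ by the chain rule (the hypothesis that each $\partial_i f$ is differentiable is exactly what is needed for the second application), and invoke the one-variable Taylor--Lagrange theorem on $[0,1]$, which you correctly derive from Cauchy's mean value theorem applied to $\phi(t)=g(1)-g(t)-(1-t)g'(t)$ and $\psi(t)=(1-t)^2$. Pulling back $\xi\in(0,1)$ to $\mathbf{z}=\mathbf{a}+\xi(\mathbf{b}-\mathbf{a})\in L(\mathbf{a},\mathbf{b})$ gives the claimed identity. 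The only small point worth making explicit is that differentiability of the partials also guarantees they are continuous, so $g'$ is continuous on $[0,1]$ as required for the one-variable theorem; you gesture at this but it deserves one sentence. Otherwise there is nothing to add.
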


\begin{lemma}[$L_{2}$ distance bounds]
\label{lm:distance}
If $\mathbf{m}^{*}$ is the unique maximizer and $\mathbf{w}\in\mathrm{Conv}(\mathbf{m}(S))$, then
\begin{equation}\label{eq:niceone}
\text{\ENT}(\mathbf{w})-\text{\ENT}(\mathbf{m}^{*}) \leq -\sum_{i=1}^{k}\frac{(w_{i}-m^*_{i})^{2}}{\max\{\tilde{m}^*_i,\tilde{w}_i\}}+3k\log n\enspace 
\end{equation} 
where $\mathbf{m}^{*}_{i}=\min\{m_{i}^{*},p_{i}-m_{i}^{*} \}$(respectively $\tilde{w}_{i}$), denotes the \emph{thickness} of a part $i\in[k]$.
\end{lemma}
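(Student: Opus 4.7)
My plan has three stages: reduce to the smooth surrogate $H_{\mathcal{P}}$ via Stirling, Taylor-expand $H_{\mathcal{P}}$ at $\mathbf{m}^{*}$, and kill the linear term by optimality while extracting the quadratic bound from the explicit diagonal Hessian. Convexity (i.e., $\mathrm{Conv}(\mathbf{m}(S))\cap\nats^{k}=\mathbf{m}(S)$) will enter in exactly two places: to round the possibly non-integer $\mathbf{m}^{*}$ to an integer profile in $\mathbf{m}(S)$, and to justify the first-order optimality condition on $\nabla H_{\mathcal{P}}(\mathbf{m}^{*})$.

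Concretely, I would first invoke Lemma~\ref{lm:stirling} to replace $\text{\ENT}$ with $H_{\mathcal{P}}$, at a cost of at most $k\log n$ per evaluation. Since $\mathbf{m}^{*}$ maximizes $H_{\mathcal{P}}$ over $\mathrm{Conv}(\mathbf{m}(S))$ and need not be integral, the Convexity assumption provides an integer neighbor $\hat{\mathbf{m}}^{*}\in\mathbf{m}(S)$ within $L_{\infty}$-distance one of $\mathbf{m}^{*}$; the Lipschitz estimate $|\partial H_{\mathcal{P}}/\partial v_i|=|\log((p_i-v_i)/v_i)|=O(\log n)$, valid in the non-trivial regime, then gives $|H_{\mathcal{P}}(\hat{\mathbf{m}}^{*})-H_{\mathcal{P}}(\mathbf{m}^{*})|=O(k\log n)$. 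Bundling the two Stirling passes and this rounding yields
\[
\text{\ENT}(\mathbf{w})-\text{\ENT}(\mathbf{m}^{*}) \leq H_{\mathcal{P}}(\mathbf{w}) - H_{\mathcal{P}}(\mathbf{m}^{*}) + 3k\log n,
\]
reducing the lemma to a smooth inequality. Next I would apply the Taylor Remainder Theorem on the segment $[\mathbf{m}^{*},\mathbf{w}] \subseteq \mathrm{Conv}(\mathbf{m}(S))$, splitting the right-hand side into a first-order term $\nabla H_{\mathcal{P}}(\mathbf{m}^{*})^{T}(\mathbf{w}-\mathbf{m}^{*})$ and a quadratic remainder evaluated at some $\mathbf{z}$ on the segment. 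Because $\mathbf{m}^{*}$ is the unique maximizer of the concave $H_{\mathcal{P}}$ on the convex set $\mathrm{Conv}(\mathbf{m}(S))$ and $\mathbf{w}$ lies in that set, the first-order optimality condition forces the linear term to be $\leq 0$.

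All that remains is the quadratic remainder. Separability of $H_{\mathcal{P}}$ makes $\nabla^{2}H_{\mathcal{P}}$ diagonal with entries $-p_i/(z_i(p_i-z_i))$, so the remainder equals $-\sum_i \frac{p_i(w_i - m_i^{*})^2}{2z_i(p_i - z_i)}$, and matching the target reduces to showing $z_i(p_i-z_i)/p_i \leq \max\{\tilde{m}_i^{*},\tilde{w}_i\}$ coordinatewise, up to a harmless constant absorbable in the $3k\log n$ slack. When $m_i^{*}$ and $w_i$ lie on the same side of $p_i/2$, so does $z_i$, and monotonicity of $t\mapsto t(p_i-t)$ on each half gives $z_i(p_i-z_i)/p_i \leq \min(z_i,p_i-z_i) \leq \max\{\tilde{m}_i^{*},\tilde{w}_i\}$ for free. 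The main obstacle I anticipate is the straddling case, where the segment crosses $p_i/2$ and the pointwise Hessian bound is loose in isolation; I would close it by invoking strict concavity and uniqueness of $\mathbf{m}^{*}$ to argue that a $w_i$ deep on the far side of $p_i/2$ would let the midpoint $(\mathbf{m}^{*}+\mathbf{w})/2 \in \mathrm{Conv}(\mathbf{m}(S))$ strictly beat $\mathbf{m}^{*}$ in $H_{\mathcal{P}}$, contradicting optimality, and thereby keeping $\max\{\tilde{m}_i^{*},\tilde{w}_i\}$ within a constant factor of the relevant $\tilde{z}_i$.
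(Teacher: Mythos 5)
Your plan follows exactly the paper's route: Stirling to pass to $H_{\mathcal{P}}$, Taylor with Lagrange remainder, first-order optimality over $\mathrm{Conv}(\mathbf{m}(S))$ to discard the linear term, and a pointwise Hessian bound $z_i(p_i-z_i)/p_i \leq \max\{\tilde{m}_i^*,\tilde{w}_i\}$. You have, however, correctly put your finger on the two places where the argument is delicate, and neither of your proposed fixes actually closes them.

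First, the factor of $2$. Taylor gives the quadratic remainder $-\tfrac{1}{2}\sum_i p_i(b_i-a_i^*)^2\bigl(\tfrac{1}{z_i}+\tfrac{1}{1-z_i}\bigr)$, so after bounding the Hessian you land at $-\sum_i \frac{(w_i-m_i^*)^2}{2\max\{\tilde{m}_i^*,\tilde{w}_i\}}$, a factor of two weaker than \eqref{eq:niceone}. You propose to ``absorb'' this into the $3k\log n$ slack, but that slack is \emph{additive} whereas the deficit is \emph{multiplicative} on the quadratic term, which downstream scales like $\mu(S)\gg k\log n$. A multiplicative factor cannot be hidden in an additive error term of smaller order; either the lemma constant must be halved, or a sharper two-sided Taylor estimate is needed. (For what it is worth, the paper's display \eqref{eq:mitsakos} silently drops the $\tfrac12$ relative to the Taylor Remainder Theorem stated just above it, so your arithmetic is the more careful of the two.)

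Second, the straddling case. Your midpoint argument — that if coordinate $i$ crossed $p_i/2$ then $(\mathbf{m}^*+\mathbf{w})/2$ would strictly beat $\mathbf{m}^*$ — does not go through for $k>1$. Separability means the entropy gain in the straddling coordinate at the midpoint can be offset by entropy losses in the non-straddling coordinates; first-order optimality only controls the \emph{sum} $\nabla H_{\mathcal{P}}(\mathbf{a}^*)\cdot(\mathbf{b}-\mathbf{a}^*)\le 0$, not each coordinate contribution, so a profile with $a_i^*<1/2<b_i$ in one coordinate is perfectly compatible with $\mathbf{m}^*$ being the unique maximizer. Consequently $\tilde z_i$ can genuinely exceed $\max\{\tilde a_i^*,\tilde b_i\}$ (e.g.\ $z_i$ near $1/2$ when both endpoints are away from it), and the coordinatewise bound fails as stated. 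You are right to flag this as the main obstacle; the paper simply labels the same inequality ``trivial'' without addressing the crossing case, so this is a gap shared by both arguments rather than something you introduced. A correct treatment needs either an integral-form remainder split at $p_i/2$, or a separate argument that the entropy drop accumulated on each half of the straddled interval already dominates the claimed quadratic.
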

\begin{proof}
Invoking Lemma \ref{lm:stirling}, we rewrite the difference in entropy as a difference in $\mathcal{P}$-entropy, where $\mathbf{a}^{*}$ is the probability profile of the maximizer and $\mathbf{b}$ of $\mathbf{w}$:
\[
\text{\ENT}(\mathbf{w})-\text{\ENT}(\mathbf{m}^{*}) \leq H_{\mathcal{P}}(\mathbf{b})-H_{\mathcal{P}}(\mathbf{a^{*}})+3k \log n
\]
Here, we have additionally dealt with the subtle integrality issue, namely that $\mathbf{m}^{*}$ might not belong to $\mathbf{m}(S)$. Rounding the vector to the nearest integral point produces a cumulative error of at most $2k\log(n)$ in the entropy that adds to the $k\log(n)$ error coming from Stirling's approximation. Both errors can be reduced using higher order Stirling approximations but we avoid doing so since an error of order $k\log(n)$ is unavoidable due to the approximation of the partition function.

Convexity of the domain $\mathrm{Conv}(\mathbf{m}(S))$ and differentiability of the $\mathcal{P}$-entropy provide the necessary conditions to use the Taylor Remainder Theorem. Let  $\mathbf{z}$ be a point in the linear segment $L(\mathbf{a}^{*},\mathbf{b})$. We proceed with writing the  expressions  for partial derivatives of $H_{\mathcal{P}}$.
\begin{eqnarray}
\partial_{i}H_{\mathcal{P}}(\mathbf{a}^{*})&=&-p_{i}\log\left(\frac{a^{*}_{i}}{1-a^{*}_{i}}\right)\label{eq:grad}\\
\partial_{ii}^{2}H_{\mathcal{P}}(\mathbf{z})&=&-p_{i}\left(\frac{1}{1-z_{i}}+\frac{1}{z_{i}}\right)\label{eq:hess} \enspace ,
\end{eqnarray}
while $\partial_{ij}^{2}f=0$ for $i\neq j$ due to separability of the function $H_{\mathcal{P}}$. The Taylor Remainder forumla, now reads:
\begin{equation}
H_{\mathcal{P}}(\mathbf{b})-H_{\mathcal{P}}(\mathbf{a}^{*})= \nabla H_{\mathcal{P}}(\mathbf{a}^{*}) \cdot (\mathbf{b}-\mathbf{a}^{*})-\sum_{i=1}^{K}p_{i}(b_{i}-a^*_{i})^{2}\left(\frac{1}{1-z_{i}}+\frac{1}{z_{i}}\right)\label{eq:mitsakos}
\end{equation}
Since, $\mathbf{a}^{*}$ is the unique solution to the {sc MaxEnt} problem and the domain is convex, the first term in the above formula is always bounded above by zero. Otherwise, there would be a direction $\mathbf{u}$ and a small enough parameter $\epsilon>0$ such that $\mathbf{a}^{*}+\epsilon \mathbf{u}$ has greater entropy, a contradiction.
To bound the second sum from above, let $\tilde{z}_{i}=\min\{z_{i},1-z{i} \}$(expressing the fact that binary entropy is symmetric around $1/2$) and use the trivial bound $\tilde{z}_{i} \le {\max\{\tilde{a}^*_i,\tilde{b}_i\}} $. Thus,
\begin{equation}
\label{taylor}
H_{\mathcal{P}}(\mathbf{b})-H_{\mathcal{P}}(\mathbf{a}^{*})\leq -\sum_{i=1}^{K}p_{i}(b_{i}-a^*_{i})^{2}\frac{1}{\tilde{z}_{i}} \le
 -\sum_{i=1}^{K}p_{i}\frac{(b_{i}-a^*_{i})^{2}}{\max\{\tilde{a}^*_i,\tilde{b}_i\}} \enspace .
\end{equation}
Dividing and multiplying by $p_{i}$, and writing $\tilde{w}_{i}=p_{i}\tilde{b}_{i}$, $\tilde{m}^{*}_{i} = p_{i}\tilde{a}^{*}_{i}$, gives:
\begin{equation}
H_{\mathcal{P}}(\mathbf{b})-H_{\mathcal{P}}(\mathbf{a}^{*})\leq 
 -\sum_{i=1}^{K}\frac{(w_{i}-m^*_{i})^{2}}{\max\{\tilde{m}^*_i,\tilde{w}_i\}} \enspace .
\end{equation}
where $\mathbf{w}$ and $\mathbf{m}^{*}$ are the original edge profiles. We note that for most cases we have that $\tilde{z}_{i}= z_{i}$, i.e. a block is half-empty. 
\end{proof}

In preparation of performing the "union bound", we prove that:
\begin{proposition}
\label{classes}
The number of distinct edge-profiles $|\mathbf{m}(S)|$ is bounded by $|\mathcal{M}_{\mathcal{P}}| \leq e^{2k \log n}$.
\end{proposition}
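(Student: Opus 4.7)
The plan is to simply count: by definition $\mathcal{M}_{\mathcal{P}} = \{0,\ldots,p_1\}\times\cdots\times\{0,\ldots,p_k\}$, so $|\mathcal{M}_{\mathcal{P}}| = \prod_{i=1}^{k}(p_i+1)$, and since $\mathbf{m}(S)\subseteq \mathcal{M}_{\mathcal{P}}$ it suffices to bound this product.

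First I would observe that each part $P_i\subseteq\binom{[n]}{2}$ satisfies $p_i\leq N=\binom{n}{2}\leq n^2/2$, hence $p_i+1\leq n^2$ for all $n\geq 2$. Taking the product over the $k$ coordinates then yields
\begin{equation*}
|\mathbf{m}(S)|\leq |\mathcal{M}_{\mathcal{P}}| = \prod_{i=1}^{k}(p_i+1) \leq (n^2)^k = e^{2k\log n}.
\end{equation*}

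There is essentially no obstacle here; the bound is a crude coordinate-wise estimate. The only mild subtlety worth noting is that we are bounding each axis by the (global) crude quantity $n^2$ rather than $p_i+1$ — this is deliberate since in the subsequent union bound the factor $e^{2k\log n}$ is anyway absorbed into the $k\log n$ error terms coming from Stirling and the rounding of $\mathbf{m}^*$, so there is no reason to be sharper. If a tighter bound were ever needed, one could instead use $\prod(p_i+1)\leq \prod(N+1)\leq (N+1)^k$, which is of the same order, or even use the constraint $\sum_i p_i = N$ together with AM-GM to get $\prod(p_i+1)\leq (N/k+1)^k$; none of these refinements would change the overall accounting in the proof of Theorem~\ref{thm-concentration}.
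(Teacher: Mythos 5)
Your proof is correct and coincides with the paper's own argument: both bound $|\mathcal{M}_{\mathcal{P}}| = \prod_{i=1}^{k}(p_i+1)$ factor-by-factor via $p_i+1 \leq n^2$ and multiply. The extra remarks on sharper alternatives (e.g.\ AM--GM) are sound but not used anywhere, matching the paper's own observation that $2k\log n$ is all that is needed.
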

\begin{proof}
Assuming that no constraint is placed upon $\mathbf{m}$ by $S$, then $\mathbf{m}(S)=\mathcal{M}_{\mathcal{P}}$. This number is equal to the product of $[p_{i}+1] \leq n^{2}$ as there are at most $\binom{n}{2}$ edges within a block. Multiplying the last bound we get the statement.
\end{proof}

Before proceeding with the proof of the concentration theorem, we repeat the definitions of the crucial parameters mentioned in the introduction. 
\begin{definition}
Given a partition $\mathcal{P}$ and a  $\mathcal{P}$-symmetric set $S$, define:
\begin{eqnarray}
\mu(S)&=& \min_{i\in[k]}\left\{ m^{*}_{i} \wedge (p_{i}-m^{*}_{i})\right\}\\
\lambda(S) &=& \frac{5k\log n}{\mu(S)}\\
r(S) & =& \frac{\lambda+\sqrt{\lambda^{2}+4\lambda}}{2}
\end{eqnarray}
the \emph{\MU}, \emph{condition number} and \emph{resolution} of the convex set $S$  and $x\wedge y:=\min\{ x, y\}$ denotes the $\min$ operation.
\end{definition}
\begin{proof}[Proof of Theorem~\ref{thm-concentration}]
Our goal is to use the developed machinery to control the probability of deviations from the optimum at scale $\epsilon>r(S)$. Define the set $\mathcal{L}_{\epsilon}(\mathbf{m}^{*})\stackrel{\vartriangle}{=}\{x\in S:\left|\mathbf{m}(x)-\mathbf{m}^{*} \right|\leq \epsilon \tilde{\mathbf{m}}^{*}\}$. We are going to show that $\P_{S}(\mathcal{L}^{c}_{\epsilon}(\mathbf{m}^{*}))\to 0$ ``exponentially" fast and thus provide localization of the edge profile within a scale $\epsilon$ for each coordinate. To that end, we write:
\begin{equation}
\P_{S}(\mathcal{L}^{c}_{\epsilon}(\mathbf{m}^{*})) = \sum_{\mathbf{w} \in \mathcal{L}^{c}_{\epsilon}(\mathbf{m}^{*})}\P_{S}(\mathbf{w}) \leq \sum_{\mathbf{w} \in \mathcal{L}^{c}_{\epsilon}(\mathbf{m}^{*})} \exp\left[ \text{\ENT}(\mathbf{m}) - \text{\ENT}(\mathbf{m}^{*}) \right]
\end{equation}
where we have used \eqref{eq:basic}, the approximation of the log-partition function by the entropy of the optimum edge-profile, and added the contribution of points outside of $\mathbf{m}(S)$. At this point, we are going to leverage the lower bound for the decay of entropy away from $\mathbf{m}^{*}$. This is done by first performing a union bound, i.e. considering that all points in $\mathcal{L}^{c}_{\epsilon}(\mathbf{m}^{*})$ are placed on the least favorable such point $\mathbf{w}^{*}$. Since, we are requiring coordinate-wise concentration, such point would differ from the optimal vector only in one-coordinate, and in particular should be the one that minimizes our lower bound.  Any such vector $\mathbf{w}\in \mathcal{L}_{\epsilon}(\mathbf{m}^{*})$, would have at least one coordinate $i\in[k]$ such that $|w_{i}-m_{i}^{*}|=\epsilon m^{*}_{i}$. By Lemma \ref{eq:niceone}, we get
\begin{equation}
\text{\ENT}(\mathbf{w})-\text{\ENT}(\mathbf{m}^{*}) \leq -\frac{\epsilon^{2} (m^{*}_{i})^{2}}{(1+\epsilon)m^{*}_{i}}+3k\log n = -\frac{\epsilon^{2}}{(1+\epsilon)}m^{*}_{i}+3k\log n \enspace 
\end{equation}
using the facts that $\max\{\tilde{m}_{i},\tilde{w}_{i}\}\leq \tilde{m}_{i}+\tilde{w}_{i}  \leq (1+\epsilon) m^{*}_{i}$. Now, by definition  the thickness $\mu(S)\leq \tilde{m}_{i}^{*}$ for all $i\in[k]$, and so a  a vector $\mathbf{w}^{*}$ that minimizes the bound is such that $
\text{\ENT}(\mathbf{w}^{*})-\text{\ENT}(\mathbf{m}^{*}) \leq  -\frac{\epsilon^{2}}{(1+\epsilon)}\mu(G)+3k\log n$. We perform the union bound by using  $|\mathcal{L}^{c}_{\epsilon}(\mathbf{m}^{*})| \leq |\mathcal{M}_{\mathcal{P}}| \leq \exp(2k\log n)$ from Proposition \ref{classes}:
\begin{eqnarray}
\P_{S}(\mathcal{L}^{c}_{\epsilon}(\mathbf{m}^{*})) &\leq& |\mathcal{L}^{c}_{\epsilon}(\mathbf{m}^{*})| \cdot \P_{S}(\mathbf{w}^{*})\\ 
&\leq& \exp\left[-\frac{\epsilon^{2}}{(1+\epsilon)}\mu(G)+5k\log n \right]\\ 
&\leq& \exp\left[ -\mu(G)\left(\frac{\epsilon^{2}}{1+\epsilon}-\frac{5k\log n}{\mu(G)}\right) \right]
\end{eqnarray}
Finally, identifying $\lambda(S)$ in the expression provides the statement. We note here that the resolution $r(S)$ is defined exactly so that the expression in the exponent is negative. The condition $\lambda(S)\ll 1$ is a requirement that makes  concentration possible in a small scale, i.e $\epsilon \ll 1$.
\end{proof}

\paragraph{Tightness.} The crucial steps in the proof are, firstly, the approximation of the log-partition function and, secondly, the $L_{2}$ distance bounds on the decay of entropy away from the optimum. Both steps are essentially optimal under general assumptions, as is shown in Proposition \ref{prop-lower}. Our proof can only improved by using higher order Stirling approximations and a more complicated integration process (incremental union bounds over $L_{\infty}$-shells) instead of the simple union bound, to reduce the error from $5k\log n$ down to possibly the minimum of $2k\log(n)$. Since, the above considerations would complicate the proof significantly and the gain is a small improvement in the constant we deem this unnecessary. 

\section{Proof of Theorem 1}\label{sec:sandwich}

In this section, we leverage the concentration theorem to prove that convex $\mathcal{P}$-symmetric sets are  $(\epsilon,\delta)$-\sand. Before presenting the proof of the theorem we state a preliminary lemma.
\subsection{Basic Coupling Lemma}
Consider a set of random variables $X_{1},\ldots,X_{k}$ with laws $\mu_{1},\ldots,\mu_{k}$.
A coupling between a set of random variables is a (joint) probability distribution  $\mu$, such that $\P_{\mu}(X_{i})=\P_{\mu_{i}}(X_{i})$ for $i\in[k]$, i.e. the marginals of the random variables are right. Let $A=\{1,\ldots,N\}$ be a finite set with $N$ elements.  Further let $X$ denote a uniform subset of $m$ elements of $A$, denoted as  $X\sim \mathrm{Samp}(m,A)$, and $Z$  a  subset of $A$ where each element of $A$ is included with the same probability $p$, denote as $Z\sim \mathrm{Flip}(p,A)$. 
\begin{lemma}
\label{lm-coupling}
Given a set $A$ with $N$ elements and a number $m$, define $p^{\pm}(m)=\frac{m}{(1\mp \delta)N}$. Consider the random variables  $X\sim \mathrm{Samp}(m,A)$ and $Z^{\pm}\sim \mathrm{Flip}(p^{\pm},A)$, then  there exists a coupling $\mu$ such that: 
\begin{equation}
\P_{\mu}\left(Z^{-} \subseteq X \subseteq Z^{+} \right) \geq 1- 2 \exp\left(\frac{\delta^{2}}{3(1+\delta)m} \right).
\end{equation}
\end{lemma}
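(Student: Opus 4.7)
The goal is to construct an explicit joint distribution of $(X, Z^-, Z^+)$ on a single probability space with the correct marginals, and then bound the failure of the sandwich event by Chernoff tails on the random sizes of $Z^\pm$. The construction is the natural two-sided extension of the textbook coupling of $G(n,m)$ with $G(n,p)$, so the work is entirely in verifying marginals and checking which side of the bound dominates.

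\emph{Step 1: Joint construction.} First draw $X \sim \mathrm{Samp}(m,A)$. Independently sample sizes $K^+ \sim \mathrm{Bin}(N,p^+)$ and $K^- \sim \mathrm{Bin}(N,p^-)$, which are intended to match $|Z^+|$ and $|Z^-|$. If $K^+ \geq m$, set $Z^+ = X \cup U^+$ where $U^+$ is a uniformly random $(K^+-m)$-subset of $A \setminus X$; otherwise let $Z^+$ be an independent uniformly random $K^+$-subset of $A$ (this is the failure regime for the upper containment). Symmetrically, if $K^- \leq m$ let $Z^-$ be a uniformly random $K^-$-subset of $X$; otherwise let $Z^-$ be an independent uniformly random $K^-$-subset of $A$ (the failure regime for the lower containment).

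\emph{Step 2: Marginals.} Since a $\mathrm{Flip}(p,A)$ variable, conditioned on having cardinality $k$, is uniform on the $k$-subsets of $A$, it suffices to check that for each $k$, conditional on $K^+=k$ my $Z^+$ is uniform on the $k$-subsets of $A$. For $k<m$ this is immediate. For $k \geq m$ and any fixed $k$-subset $T \subseteq A$, summing over the $\binom{k}{m}$ choices of $X \subseteq T$ gives
\[
\Pr(Z^+ = T \mid K^+ = k) \;=\; \binom{k}{m} \cdot \frac{1}{\binom{N}{m}} \cdot \frac{1}{\binom{N-m}{k-m}} \;=\; \frac{1}{\binom{N}{k}},
\]
so $Z^+$ has the desired marginal. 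The same computation verifies the marginal of $Z^-$.

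\emph{Step 3: Failure probability.} By construction, $X \not\subseteq Z^+$ can occur only when $K^+ < m$, and $Z^- \not\subseteq X$ can occur only when $K^- > m$. Since $\mathbb{E}[K^+] = Np^+ = m/(1-\delta)$, the deviation $K^+ < m$ is exactly $K^+ < (1-\delta)\mathbb{E}[K^+]$, and a lower-tail Chernoff bound gives $\Pr(K^+ < m) \leq \exp\bigl(-\delta^2 m / (2(1-\delta))\bigr)$. Since $\mathbb{E}[K^-] = m/(1+\delta)$, the deviation $K^- > m$ is exactly $K^- > (1+\delta)\mathbb{E}[K^-]$, and the upper-tail Chernoff bound gives $\Pr(K^- > m) \leq \exp\bigl(-\delta^2 m/(3(1+\delta))\bigr)$. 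The latter is the weaker of the two, so a union bound yields the stated factor of $2$ in front of $\exp\bigl(-\delta^2 m/(3(1+\delta))\bigr)$ (the lemma as displayed has a sign/placement typo in the exponent; the intended bound is of this form).

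There is no genuine obstacle: the construction is forced once one decides to grow $Z^+$ outward from $X$ and shrink $Z^-$ inward from $X$, and the only mild subtlety is the marginal check in Step 2, which is a one-line counting identity. The two Chernoff estimates are standard and the lower bound on $m$ required for a nontrivial conclusion is implicit in $\delta^2 m \gg 1+\delta$.
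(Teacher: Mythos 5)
Your proof is correct, but it takes a genuinely different route from the paper's. The paper realizes all three sets on a common probability space by drawing $N$ i.i.d.\ uniforms $U_1,\ldots,U_N$ on $[0,1]$ and setting $Z^{\pm}=\{i:U_i\le p^{\pm}\}$ and $X=\{i:U_i\le U_{(m)}\}$ (the $m$ indices with the smallest $U_i$). With that threshold construction the nesting $Z^-\subseteq X\subseteq Z^+$ reduces \emph{tautologically} to the size comparison $|Z^-|\le m\le|Z^+|$, and the marginals are immediate, with no counting needed. Your conditional construction---drawing $X$ first, then independent binomial sizes $K^\pm$, and growing $Z^+$ outward from $X$ or shrinking $Z^-$ inward from $X$---arrives at the same failure events $\{K^+<m\}$ and $\{K^->m\}$ and the same two Chernoff estimates, but it requires the explicit marginal verification in Step 2; your counting identity there is correct. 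What your approach buys is a construction that does not rely on a latent linear order, so it adapts more readily to settings without a natural threshold structure. What it gives up is the automatic monotonicity of the paper's coupling, which yields $Z^-\subseteq Z^+$ for free (since $p^-<p^+$); your construction does not guarantee this, though it plays no role in the stated conclusion. You also correctly flag the typo in the displayed bound: both arguments actually produce $1-2\exp\bigl(-\delta^2 m/(3(1+\delta))\bigr)$.
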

\begin{proof} Let $\mu$ be the joint distribution of $U_{1},\ldots,U_{N}$ i.i.d uniform in $[0,1]$ random variables, and $U_{(m)}$ to denote the $m$-th smallest such random variable. Define $X(U)=\{i\in A: U_{i}\leq U_{(m)}\}$ and $Z^{\pm}(U)=\{i\in A: U_{i} \leq p^{\pm}\}$ to be random subsets of $A$. By construction it is easy to see that $X(U)$ and $Z^{\pm}(U)$ have the right marginals. By construction of the sets, it is easy to see that the following equivalence holds:
\[
Z^{-} \subseteq X \subseteq Z^{+} \Leftrightarrow |Z^{-}| \leq |X| \leq |Z^{+}|
\]
To analyze the second event define the ``bad" events:
\begin{eqnarray*}
B_{-}&=&\{u \in [0,1]^{N}: \sum_{i=1}^{N}\mathbb{I}(u_{i}\leq p_{-}) > m
\} \\
B_{+}&=&\{u \in [0,1]^{N}: \sum_{i=1}^{N}\mathbb{I}(u_{i}\leq p_{+}) < m\}
\end{eqnarray*}
Each event can be stated as the probability that the sum $X_{\pm}$ of $n$ i.i.d Bernoulli $p_{\pm}$ random variables exceeds (smaller then) the expectation $np_{\pm}$. By employing standard Chernoff bounds, we get:
\begin{eqnarray*}
\P_{\mu}(B_{-})&=&\P_{U}(X_{-}>m) = \P_{\mu}(X>(1+\delta)np_{-}) \leq \exp(-\frac{\delta^{2}}{3(1+\delta)}m) \\
\P_{U}(B_{+})&=&\P_{U}(X_{+}<m) = \P_{U}(X<(1-\delta)np_{+}) \leq \exp(-\frac{\delta^{2}}{2(1-\delta)}m)\\
\end{eqnarray*}
The proof is concluded through the use of union bound: 
\[
\P_{\mu}(B_{-}\cup B_{+})\leq \P_{\mu}(B_{-})+\P_{U}(B_{+}) \leq 2\exp\left(-\frac{\delta^{2}}{3(1+\delta)}m\right)
\]
This concludes the lemma.
\end{proof}

Using, this simple lemma and Theorem 2, we  prove the \emph{sandwich theorem}.

\subsection{Main proof.}

Recall,  that our aim is to prove that the uniform measure over the set $S$ is $(\epsilon,\delta)$-sandwichable by some product measure $G(n,\mathbf{Q})$.

\begin{proof}[Proof of Theorem 1]
Given a $\mathcal{P}$-symmetric convex set $S$, consider $\mathbf{m}^{*}(S)$ the optimal edge-profile and define the $n\times n$ matrix $\mathbf{Q}^{*}(S)$ as: $Q_{u,v} = \frac{m^{*}_{i}}{p_{i}}, \forall \{u,v\}\in P_{i} \text{ and } i\in[k]$. Further, define $\mathbf{q}_{i}:=\frac{m^{*}_{i}}{p_{i}}, \forall \in [k]$ to be used later. In order to prove the required statement, we need to construct a coupling between the random variables $G\sim U(S)$, $G^{\pm}\sim G(n,(1\pm)\mathbf{Q}^{*})$.  By separating edges according to the partition, we express the edge set of the graphs as $E(G)=E_{1}\cup \ldots \cup E_{k}$ and $E(G^{\pm})= E_{1}^{\pm}\cup \ldots \cup E^{\pm}_{k}$. 

Let $\mu$ denote the joint probability distribution of $N+1$ i.i.d.\ uniform random variables $U_{1},\ldots, U_{N+1}$ on $[0,1]$. As in the coupling lemma, we are going to use these random variables to jointly generate the random edge-sets of $G^{-}, G, G^{+}$. Using $U_{N+1}$, we can first generate the edge profile $\mathbf{v}\in \mathbf{m}(S)$ from its corresponding distribution. Then, conditional on  the edge profile $\mathbf{v}\in \N^{k}$, the probability distribution of $G$ factorizes in $G(n,m)$-like distributions for each block (Section~\ref{app:basic}). Lastly, we associate with each edge $e$ a unique random variable $U_{e}$ and construct a coupling for edges in each block separately. 

In our notation, $E_{i}\sim \mathrm{Samp}(v_{i},P_{i})$ and $E^{\pm}_{i}\sim \mathrm{Flip}(q_{i}^{\pm},P_{i})$. Using Lemma\ref{lm-coupling},   we  construct a coupling for each $i\in [k]$ between the random variables $E_{i},E_{i}^{+},E_{i}^{-}$ and bound  the probability that the event $E_{i}^{-} \subseteq E_{i} \subseteq E_{i}^{+}$ does not hold.  Using the union bound over the $k$ parts, we then obtain an estimate of the probability that the property holds across blocks, always conditional on the edge-profile $\mathbf{v}$. The final step involves getting rid of the conditioning by invoking the concentration theorem.

Concretely, define $B_{i}$ the event that the $i$-th block does not satisfy the property  $E_{i}^{-} \subseteq E_{i} \subseteq E_{i}^{+}$ and $\mathcal{L}_{\epsilon}(\mathbf{m}^{*})$ the set appearing in Theorem \ref{thm-concentration}. We have that $\P_{\mu}(G^{-} \subseteq G \subseteq G^{+}) = 1- \P_{\mu}\left(\cup B_{i}\right)$.  Conditioning on the edge profile gives:
\begin{eqnarray*}
\P_{\mu}(\cup B_{i}) &\leq& \P_{\mu}(\mathcal{L}^{c}_{\epsilon}(\mathbf{m}^{*})) +  \sum_{\mathbf{v}\in L_{\epsilon}(\mathbf{m}^{*})} \P_{\mu}(\cup B_{i}|\mathbf{v})\P_{\mu}(\mathbf{v})\\ &\leq&  \P_{\mu}(\mathcal{L}^{c}_{\epsilon}(\mathbf{m}^{*})) +   \max_{\mathbf{v}\in L_{\epsilon}(\mathbf{m}^{*})}\P_{\mu}(\cup B_{i}|\mathbf{v})\\
&\leq&  \P_{\mu}(\mathcal{L}^{c}_{\epsilon}(\mathbf{m}^{*})) +   \max_{\mathbf{v}\in L_{\epsilon}(\mathbf{m}^{*})}\left[\sum_{i=1}^{k}\P_{\mu}(B_{i}|\mathbf{v})\right]
\end{eqnarray*}
The first inequality holds by conditioning on the edge profile and bounding the probability of the bad events from above by $1$ for all ``bad" profiles (outside of the concentration set). The second inequality, is derived by upper bounding the probability of the bad event by the most favorable such edge-profile and the last inequality follows from an application of the union bound. Applying Theorem \ref{thm-concentration}, we get a bound on the first term and then invoking Lemma \ref{lm-coupling} we get a bound for each of the term in the sum:
\[
\P_{\mu}(\cup B_{i}) \leq  \exp\left[-\mu(S)\left(\frac{\epsilon^{2}}{1+\epsilon} - \lambda(S)\right) \right] +  2 \max_{\mathbf{v}\in L_{\epsilon}(\mathbf{m}^{*})}\left[\sum_{i=1}^{k}\exp\left(-\frac{\epsilon^{2}}{3(1+\epsilon)}v_{i} \right)\right]
\]
Hence, we see that the upper bound is monotone in $v_{i}$ for all $i\in[k]$. Additionally, we know that for all $\mathbf{v}\in L_{\epsilon}(\mathbf{m}^{*})$ it holds that $\mathbf{v} \geq (1-\epsilon) \mathbf{m}^{*}$. Further, by definition we have $\mathbf{m}^{*}\geq \mu(S)$. The bound now becomes:
\begin{eqnarray*}
\P_{\mu}(\cup B_{i}) &\leq&  \exp\left[-\mu(S)\left(\frac{\epsilon^{2}}{1+\epsilon} - \lambda(S)\right) \right] +  2k\exp\left[-\frac{\epsilon^{2}(1-\epsilon)}{3(1+\epsilon)}\mu(S) \right]\\
&\leq&  \exp\left[-\mu(S)\left(\frac{\epsilon^{2}}{1+\epsilon} - \lambda(S)\right) \right] +\exp\left[-\mu(S)\left(\frac{\epsilon^{2}(1-\epsilon)}{3(1+\epsilon)}-\frac{\log (2k)}{\mu(S)} \right)\right]
\end{eqnarray*}
Finally, using $\epsilon<1/2$ and  $\log(2k)/\mu(S) \leq \lambda(S)$ we arrive at the required conclusion.
\end{proof}

\section{Supplementary Proofs}\label{app:basic}

\begin{proof}[Proof of Proposition~\ref{paparia}]
Fix an $\mathbf{x}\in H_{n}$ and consider the set $O(\mathbf{x})\stackrel{\vartriangle}{=}\{\mathbf{y}\in H_{n}: \exists \pi \in \Pi_{n}(\mathcal{P}) \text{ such that } y = \pi(\mathbf{x})\}$ and call it the \emph{orbit} of $\mathbf{x}$ under $\Pi_{n}(\mathcal{P})$(note that by group property orbits form a partition of $H_{n}$). The assumption of symmetry, implies that $f$ is constant for all $\mathbf{y}\in O(\mathbf{x})$: 
\[
f(\mathbf{y}_{1})=f(\mathbf{y}_{2})= f(\mathbf{x}), \ \forall \mathbf{y}_{1},\mathbf{y}_{2} \in O(\mathbf{x})  
\]
By  definition of $\Pi_{n}(\mathcal{P})$, for any $\mathbf{x}\in H_{n}$ there is a permutation $\pi_{x}\in \Pi_{n}(\mathcal{P})$, such that i) $\pi_{x}(\mathbf{x})=(\pi_{x,1}(x_{P_{1}}),\ldots,\pi_{x,k}(x_{P_{k}})) \in O(\mathbf{x})$,  ii) for all $i\in[k]$, $\pi_{x,i}(x_{P_{i}})$ is a bit-string where all $1$'s appear consequently starting from the first position. Let as identify with each orbit $O\subset H_{n}$ such a distinct element $\mathbf{x}_{o}$. As the function of $f$ is constant along each orbit, its value depends only through $\mathbf{x}_{o}$, which in turn depends only on the number of $1$'s(edges) in each \part, encoded in the edge profile $\mathbf{m}=(m_{1},\ldots,m_{k})$.
\end{proof}

\begin{proof}[Proof of Proposition~\ref{prop-lower}] Consider $\mathcal{P}$ any balanced partition consisting of $k$-parts and let $S=\mathcal{G}_{n}$ to be the space of all graphs. Then $Z(S)=|S|=2^{\binom{n}{2}}$ and $\mathbf{m}^{*}(S)$ is the all $\binom{n}{2}/2k$ vector (all blocks half full). Using Stirling's approximation of the factorial we have that:
\begin{eqnarray}
\log(|S|)-\text{\ENT}(\mathbf{m}^{*}) &\geq&  \binom{n}{2}\log 2 -2k\log\binom{\binom{n}{2}/k}{\frac{1}{2}\binom{n}{2}/k}\\
&\geq&  k \log n - \frac{k}{2}\log k
\end{eqnarray}
For $k=o(n^{2})$ the last expression is of order $\Omega(k \log(n))$.

\end{proof}

 \begin{proposition}\label{paparaki}
 Consider for all $i\in[k]$ disjoint sets of edges $I_{i},O_{i}\subset P_{i}$ and  define  the events $A_{i}=\{ G \in \mathcal{G}_{n}: I_{i}\subset E(G) \text{ and } O_{i}\cap E(G) = \emptyset \}$. Conditional on the edge profile of $G$ being $\mathbf{v}$, the events are independent, i.e. it holds that: $\P_{S}\left(A_{1}\cap \ldots \cap A_{k}|\mathbf{v}\right) = \prod_{i=1}^{k}\P_{S}(A_{i}|v_{i})$.
 %moreover $\P_{S}(\cdot |\mathbf{v})=\P(\cdot |\mathbf{v})$ depends on $S$ only through whether $\mathbf{v}\in \mathbf{m}(S)$.
 \end{proposition}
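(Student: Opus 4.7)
The plan is to exploit two facts in sequence: (i) by $\mathcal{P}$-symmetry of $S$, the conditional distribution of $G \sim U(S)$ given $\mathbf{m}(G) = \mathbf{v}$ is uniform over the set of \emph{all} graphs with edge profile $\mathbf{v}$ (not just those restricted by some further structure of $S$), and (ii) this uniform distribution factorizes into $k$ independent uniform choices, one per block of the partition, after which the events $A_i$ decouple automatically because each $A_i$ only looks at edges in $P_i$.

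For step (i), I would invoke Proposition~\ref{paparia}: since $\mathbb{I}_S$ is a $\Pi_N(\mathcal{P})$-invariant function, it depends only on $\mathbf{m}(\mathbf{x})$. Therefore for any fixed $\mathbf{v} \in \mathbf{m}(S)$, every graph $G$ with $\mathbf{m}(G)=\mathbf{v}$ lies in $S$, so $\{G \in S : \mathbf{m}(G)=\mathbf{v}\} = \{G \in \mathcal{G}_n : \mathbf{m}(G)=\mathbf{v}\}$. Because $U(S)$ assigns equal mass to every element of $S$, the conditional law is simply uniform on this latter set.

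For step (ii), I would identify a graph $G$ with the tuple $(E_1(G), \ldots, E_k(G))$ where $E_i(G) = E(G) \cap P_i$. Under the conditioning $\mathbf{m}(G) = \mathbf{v}$, the admissible tuples are exactly those with $E_i \subseteq P_i$ and $|E_i|=v_i$, and the number of such tuples equals $\prod_i \binom{p_i}{v_i}$. Thus the uniform measure on graphs of profile $\mathbf{v}$ is the product measure $\bigotimes_{i=1}^k U_i$, where $U_i$ is the uniform distribution on $v_i$-subsets of $P_i$. Equivalently, $(E_1,\ldots,E_k)$ are mutually independent with $E_i \sim \mathrm{Samp}(v_i, P_i)$ in the notation of Lemma~\ref{lm-coupling}.

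Finally, the event $A_i = \{I_i \subseteq E(G),\ O_i \cap E(G) = \emptyset\}$ depends only on $E_i(G)$ because $I_i, O_i \subseteq P_i$. Since measurable functions of independent random variables are independent, we conclude
\[
\mathbb{P}_S(A_1 \cap \cdots \cap A_k \mid \mathbf{v}) = \prod_{i=1}^k U_i(\{E_i : I_i \subseteq E_i,\ O_i \cap E_i = \emptyset\}) = \prod_{i=1}^k \mathbb{P}_S(A_i \mid v_i),
\]
where the last equality uses that the marginal of the $i$th factor under the conditional distribution is exactly $U_i$, which depends on $\mathbf{v}$ only through $v_i$. I do not anticipate any real obstacle; the only subtle point is step (i), i.e., justifying via Proposition~\ref{paparia} that conditioning on $\mathbf{m}(G)=\mathbf{v}$ within $S$ is the same as conditioning on $\mathbf{m}(G)=\mathbf{v}$ within $\mathcal{G}_n$. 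Once that is in place, independence is a pure bookkeeping consequence of the product structure of the configuration space $P_1 \times \cdots \times P_k$.
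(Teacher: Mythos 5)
Your proposal is correct and takes essentially the same approach as the paper: both rely on (a) $\mathcal{P}$-symmetry forcing the conditional law given $\mathbf{m}(G)=\mathbf{v}$ to be uniform over \emph{all} graphs with profile $\mathbf{v}$, and (b) that this uniform law factorizes across blocks because the configuration space is a product $\binom{P_1}{v_1}\times\cdots\times\binom{P_k}{v_k}$. The only cosmetic difference is that the paper carries out the block-separability by explicitly counting the valid subsets in each block (the binomial coefficients $\binom{p_i-r_i}{v_i-n_i}$) and dividing, whereas you phrase it abstractly as independence of coordinate marginals under a product measure; the underlying argument is the same.
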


\begin{proof}[Proof of Proposition~\ref{paparaki}]
Since $G\sim U(S)$ the distribution of $G$ is by definition uniform on $S$. This also means that it is uniform on the subset of graphs having edge profile $\mathbf{v}\in \nats^{k}$(conditioning). But then:
\begin{eqnarray*}
\P_{S}\left(A_{1}\cap \ldots \cap A_{k}|\mathbf{v}\right) &=& \frac{\P_{S}\left(A_{1}\cap \ldots \cap A_{k}\cap \mathbf{m}(G)=\mathbf{v}\right)}{\P_{S}(\mathbf{m}(G)=\mathbf{v})} \\
&=& \frac{\left|A_{1}\cap \ldots \cap A_{k}\cap \mathbf{m}(G)=\mathbf{v} \right|}{\left| \mathbf{m}(G)=\mathbf{v} \right|}\mathbb{I}_{\mathbf{m}(S)}(\mathbf{v})
\end{eqnarray*}
where the first equality follows from Bayes rule and the second due to uniformity and the fact that our symmetry assumption implies that membership in $S$ depends only on the edge-profile $\mathbf{m}$. Recall that each set $A_{i}=\{ G \in \mathcal{G}_{n}: I_{i}\subset E(G) \text{ and } O_{i}\cap E(G) = \emptyset \}$ imposes the requirement that the edges in $I_{i}$ are included in $G$ and that the edges in $O_{i}$ are not included in $G$. Having conditioned on $\mathcal{v}$, we know that exactly $v_{i}$ edges from $P_{i}$ are included in $G$ and that we can satisfy the requirements for edges in $P_{i}$ by selecting any subset of $v_{i}-|I|_{i}$ edges out of $P_{i}\setminus\left(I_{i}\cup _{i} \right)$. For convenience set $|P_{i}|=p_{i}$, $|I|_{i}=n_{i}$, $|O_{i}\cup I_{i}|=r_{i}$, and let $C^{n}_{\ell}$ denote the number of $k$-combinations out of an $n$ element set(binomial coefficient). The number of valid subsets of $P_{i}$ is then given by $C^{p_{i}-r_{i}}_{ v_{i}-n_{i}}$. As the constraints imposed are separable, we have:
\begin{eqnarray*}
\frac{\left|A_{1}\cap \ldots \cap A_{k}\cap \mathbf{m}(G)=\mathbf{v} \right|}{\left| \mathbf{m}(G)=\mathbf{v} \right|}= \frac{\prod_{i=1}^{k}C^{p_{i}-r_{i}}_{ v_{i}-n_{i}}}{\left|\mathbf{m}(G)=\mathbf{v} \right|} = \prod_{i=1}^{k}  \frac{|A_{i}\cap \mathbf{m}(G)=\mathbf{v}|}{\left|\mathbf{m}(G)=\mathbf{v} \right|}
\end{eqnarray*}
which gives the required identity by exploiting again uniformity of the probability  measure.
\end{proof}	
\bibliographystyle{plain}
\bibliography{fixed}

\end{document}